\documentclass[journal,a4paper,romanappendices]{IEEEtran}
\usepackage{graphicx} % Required for inserting images
\usepackage{amsmath}
\usepackage{mathrsfs}
\usepackage{amssymb}
\usepackage{overpic}
\usepackage{algorithm}
\usepackage{algorithmic}
\usepackage{array}
\usepackage{graphicx}
\usepackage{color}
\usepackage{epstopdf}
\usepackage{acronym}
\usepackage{enumerate}
\definecolor{BLUE}{rgb}{0,0,1}

\acrodef{aoa}[AOA]{angle-of-arrival}
\acrodef{acf}[ACF]{autocorrelation function}
\acrodef{bcrb}[BCRB]{Bayesian Cram\'{e}r-Rao bound}
\acrodef{bp}[BP]{belief propagation}
\acrodef{cdi}[CDI]{cooperative dilution intensity}
\acrodef{cl}[CL]{cooperative localization}
\acrodef{cdf}[CDF]{cumulative distribution function}
\acrodef{crb}[CRB]{Cram\'{e}r-Rao bound}
\acrodef{crlb}[CRLB]{Cram\'{e}r-Rao lower bound}
\acrodef{dof}[DoF]{degree of freedom}
\acrodef{dct}[DCT]{discrete cosine transform}
\acrodef{dpeb}[DPEB]{directional position error bound}
\acrodef{fim}[FIM]{Fisher information matrix}
\acrodef{efim}[EFIM]{equivalent Fisher information matrix}
\acrodef{icrb}[ICRB]{inverse CRB}
\acrodef{iid}[i.i.d.]{independently and identically distributed}
\acrodef{isac}[ISAC]{integrated sensing and communication}
\acrodef{mse}[MSE]{mean-squared error}
\acrodef{pdf}[PDF]{probability density function}
\acrodef{peb}[PEB]{position error bound}
\acrodef{speb}[SPEB]{squared position error bound}
\acrodef{pll}[PLL]{phase-locked loop}
\acrodef{rbs}[RBS]{reference broadcast synchronization}
\acrodef{rhs}[RHS]{right hand side}
\acrodef{rii}[RII]{ranging information intensity}
\acrodef{rss}[RSS]{received signal strength}
\acrodef{rc}[RC]{ranging coefficient}
\acrodef{toa}[TOA]{time-of-arrival}
\acrodef{tdoa}[TDOA]{time-difference-of-arrival}
\acrodef{tpsn}[TPSN]{time synchronization protocol for sensor network}
\acrodef{vmp}[VMP]{variational message passing}
\acrodef{wsn}[WSN]{wireless sensor network}
\acrodef{dio}[DIO]{distance-information-only}
\acrodef{aio}[AIO]{angle-information-only}
\acrodef{saaf}[SAAF]{squared array aperture function}
\acrodef{snc}[S\&C]{sensing and communications}
\acrodef{uoa}[UOA]{uniformly oriented array}
\acrodef{rgg}[RGG]{random geometric graph}
\acrodef{rms}[RMS]{root-mean-square}
\acrodef{snr}[SNR]{signal-to-noise ratio}
\acrodef{eoc}[EoC]{efficiency of cooperation}
\acrodef{npi}[NPI]{nominal position information}
\acrodef{gnss}[GNSS]{global navigation satellite system}
\acrodef{mimo}[MIMO]{multiple-input multiple-output}
\acrodef{mcs}[MCS]{minimally constrained system}
\acrodef{zzb}[ZZB]{Ziv-Zakai bound}
\acrodef{wwb}[WWB]{Weiss-Weinstein lower bound}
\acrodef{nlos}[NLOS]{non-light-of-sight}
\acrodef{mmse}[MMSE]{minimum mean squared error}
\acrodef{uav}[UAV]{unmanned aerial vehicle}
\acrodef{ppp}[PPP]{Poisson point process}
\acrodef{bpp}[BPP]{binomial point process}
\acrodef{cln}[CLN]{cooperative location-aware network}
\acrodef{pdr}[PDR]{pedestrian dead reckoning}
\acrodef{ml}[ML]{maximum likelihood}
\acrodef{map}[MAP]{maximum \textit{a posteriori}}
\acrodef{ofdm}[OFDM]{orthogonal frequency division multiplexing}
\acrodef{otfs}[OTFS]{orthogonal time frequency space}
\acrodef{sc}[SC]{single carrier}
\acrodef{cp}[CP]{cyclic prefix}
\acrodef{pacs}[P-ACS]{periodic autocorrelation sequence}
\acrodef{eisl}[EISL]{expected integrated sidelobe level}
\acrodef{dft}[DFT]{discrete Fourier transform}
\acrodef{6g}[6G]{sixth-generation}
\acrodef{drt}[DRT]{deterministic-random tradeoff}
\acrodef{psk}[PSK]{phase shift keying}
\acrodef{qpsk}[QPSK]{quadrature phase shift keying}
\acrodef{bpsk}[BPSK]{binary phase shift keying}
\acrodef{ask}[ASK]{amplitude shift keying}
\acrodef{apsk}[APSK]{amplitude phase shift keying}
\acrodef{qam}[QAM]{quadrature amplitude modulation}
\acrodef{pam}[PAM]{pulse amplitude modulation}
\acrodef{ba}[BA]{Blahut-Arimoto}
\acrodef{islr}[ISLR]{integrated sidelobe level ratio}
\acrodef{mpdm}[MPDM]{multiset-partition distribution matching}
\acrodef{ess}[ESS]{enumerative sphere shaping}
\acrodef{cdma}[CDMA]{code division multiple access}
\acrodef{afdm}[AFDM]{affine frequency division multiplexing}
\acrodef{mi}[MI]{mutual information}
\acrodef{aacf}[A-ACF]{aperiodic autocorrelation function}
\acrodef{fft}[FFT]{fast Fourier transform}
\acrodef{mb}[MB]{Maxwell-Boltzmann}
\acrodef{fec}[FEC]{forward error-correction code}
\acrodef{gpas}[G-PAS]{generalized PAS}
\acrodef{kld}[KLD]{Kullback-Leibler divergence}
\acrodef{jhmess}[JHM-ESS]{joint hard-moment ESS}
\acrodef{sess}[S-ESS]{scalarized ESS}
\acrodef{bler}[BLER]{block error rate}
\acrodef{fbl}[FBL]{finite blocklength}

\acrodef{acfk}[ACFK]{auto-correlation function keying}
\acrodef{afk}[AFK]{ambiguity function keying}
\acrodef{af}[AF]{ambiguity function}
\acrodef{acf}[ACF]{auto-correlation function}
\acrodef{pacf}[P-ACF]{periodic auto-correlation function}
\acrodef{ber}[BER]{bit error rate}
\acrodef{ser}[SER]{symbol error rate}
\acrodef{dpaf}[DP-AF]{discrete periodic ambiguity function}
\acrodef{fstaf}[FST-AF]{fast-slow-time ambiguity function}
\acrodef{fst}[FST]{fast-slow-time}
\acrodef{ici}[ICI]{inter-carrier interference}
\acrodef{esl}[ESL]{expected sidelobe level}
\acrodef{psl}[PSL]{peak sidelobe level}
\acrodef{pslr}[PSLR]{peak sidelobe level ratio}
\acrodef{isi}[ISI]{inter-symbol interference}
\acrodef{zf}[ZF]{zero forcing}
\acrodef{llr}[LLR]{log-likelihood ratio}
\acrodef{awgn}[AWGN]{additive white Gaussian noise}
\acrodef{sir}[SIR]{signal-to-interference ratio}
\acrodef{cfar}[CFAR]{constant false-alarm rate}
\acrodef{cacfar}[CA-CFAR]{cell-averaging constant false-alarm rate}
\acrodef{tx}[Tx]{transmitter}
\acrodef{rx}[Rx]{receiver}
\acrodef{lti}[LTI]{linear time-invariant}
\acrodef{mf}[MF]{matched-filtering}
\acrodef{mgf}[MGF]{moment generating function}
\acrodef{pcs}[PCS]{probabilistic constellation shaping}
\acrodef{pas}[PAS]{probabilistic amplitude shaping}
\acrodef{ldpc}[LDPC]{low-density parity-check}
\acrodef{ccdm}[CCDM]{constant composition distribution matching}
\acrodef{roc}[ROC]{receiver operating characteristic}
\acrodef{cdf}[CDF]{cumulative distribution function}
\acrodef{csirs}[CSI-RS]{channel state information reference signals}
\acrodef{dmrs}[DMRS]{demodulation reference signals}
\acrodef{zcz}[ZCZ]{zero-correlation zone}
\acrodef{lcz}[LCZ]{low-correlation zone}
\acrodef{2dfft}[2D-FFT]{two-dimensional fast Fourier transform}

\acrodefplural{dof}[DoFs]{degrees of freedom}

\usepackage{winsnotation}
\usepackage{algorithm}
\usepackage{algorithmic}
\usepackage{makecell}
\usepackage{subfig}
\usepackage{cite}
\usepackage{flushend}

\newtheorem{assumption}{Assumption}
\newtheorem{problem}{Problem}
\newtheorem{definition}{Definition}
\newtheorem{proposition}{Proposition}
\newtheorem{lemma}{Lemma}
\newtheorem{theorem}{Theorem}
\newtheorem{remark}{Remark}

\usepackage{tikzsymbols}

\title{How Does Probabilistic Constellation Shaping Control Ranging Sidelobe Levels in ISAC?}
\author{Zexuan Jing, \IEEEmembership{Member, IEEE}, Yifeng Xiong, \IEEEmembership{Member, IEEE}, Fan Liu, \IEEEmembership{Senior Member, IEEE}, Xuehua Li, \IEEEmembership{Member, IEEE}, and Jianhua Zhang, \IEEEmembership{Fellow, IEEE}
\thanks{This work was supported in part by the National Natural Science Foundation of China (NSFC) under Grant U25B2009, Grant 62522107, and Grant 62331023. (\textit{Corresponding author: Yifeng Xiong.})}
}

\begin{document}
\maketitle
\begin{abstract}
\Ac{pcs}, originally designed to fill the gap between finite uniform constellations and continuous capacity-achieving input distributions, has recently emerged as a central tool in adjusting the \ac{eisl} in \ac{isac}. In this paper, we show that \ac{pcs} goes beyond \ac{eisl}: A large class of \ac{pcs} methods, termed as type-based \ac{pcs}, are capable of providing precise control over codeword-wise \ac{islr} for systems employing \ac{ofdm}, thus yielding a stronger sensing performance guarantee than \ac{eisl} control. We further show that \ac{ofdm} is almost the only unitary modulation basis whose codeword-wise \ac{islr} can be precisely controlled by type-based \ac{pcs}. Based upon the analytical results, practical type-based \ac{pcs} methods compatible with the \ac{pas} framework are conceived, which achieve higher communication rate than existing \ac{isac}-oriented \ac{pcs} methods under identical \ac{islr} constraints.
\end{abstract}

\section{Introduction}
\Acf{isac} is emerging as an important capability of future wireless networks, enabling the same radio infrastructure to support both information delivery and environmental awareness. By turning communication transmissions into probes of the surrounding environment, \ac{isac} is expected to facilitate applications such as autonomous transportation, industrial automation, smart environments, and low-altitude services \cite{Chafii2023CST, saad2019vision, 9606831}. 

In a practical wireless frame, freely designed reference signals, such as preambles and pilots, usually occupy only a limited fraction of the available time-frequency resources \cite{pilot_and_data}, while most resources are devoted to random data payload symbols. In monostatic or cooperative \ac{isac}, the transmitted data sequence is directly available to the sensing processor and may therefore be used as a reference for matched filtering \cite{iceberg}. Making reliable use of these payload signals is thus a central problem in communication-centric \ac{isac}. However, the data symbols have to convey information and are inherently random. Consequently, their sensing behavior varies from one transmission to another, a phenomenon known as the \ac{drt} \cite{tradeoff_tit}. 

From the perspective of sensing, the quality of a probing waveform is commonly evaluated through its autocorrelation function or ambiguity function \cite{radarbook}. To elaborate, a target produces a delayed copy of the transmitted waveform, and the corresponding matched filter output contains a mainlobe surrounded by sidelobes. A large sidelobe generated by a strong target may mask the mainlobe of a nearby weak target. One important measure of this effect is the \acf{islr}, defined as the ratio between the aggregate sidelobe energy and the mainlobe energy. It thus reflects how well targets of different strengths can be separated in the range domain. When random communication symbols are used for sensing, however, the autocorrelation function and the resulting \ac{islr} also become random quantities.

To obtain a tractable ensemble-level description, existing works proposed the \ac{eisl}, namely the expectation of the aggregate sidelobe energy, as a ranging metric for random communication waveforms \cite{ofdm_optimal,iceberg}. This metric makes it possible to compare signaling schemes without accounting for each data realization. In this context, it is shown that, under the \ac{pacf} model, \ac{ofdm} minimizes the \ac{eisl} among unitary modulation bases \cite{ofdm_optimal}. 

\Acf{pcs} has been applied in this context for controlling this average sensing behavior \cite{pcs_isac,pcs_filter,isac_pcs_pas,pcs_prototype}. The technique was originally developed to close the gap between finite uniform constellations and the continuous capacity-achieving input distributions \cite{pas}. By adjusting the probabilities of constellation symbols, \ac{pcs} reshapes the energy as well as higher-order statistics of the input distribution. In particular, the fourth-order moment acts as a key parameter for controlling the \ac{eisl}. In light of this, recent works have employed \ac{pcs} to approach the communication-optimal input distribution subject to an \ac{eisl} constraint \cite{pcs_isac,isac_pcs_pas}. 

Nevertheless, existing studies treat \ac{pcs} primarily as a statistical tool for adjusting the \ac{eisl}, without characterizing the intrinsic limit of the sidelobe control that it can provide. In particular, it is unclear whether \ac{pcs} can offer a guarantee stronger than the ensemble mean (i.e., the \ac{eisl}), such as an upper bound that holds for every finite-length codeword and remains useful after coherent integration. Moreover, conventional \ac{pcs} and \ac{pas} are designed around communication-optimal input distributions rather than the joint moment constraints arising in \ac{isac}. Their usual factorized or rotational-symmetric constructions may therefore incur unnecessary finite-length rate loss under a given sensing constraint \cite{isac_pcs_pas}, leading to an unfavorable sidelobe-rate tradeoff. A satisfactory solution to this issue, should both reveal the strongest sidelobe guarantee inherent to \ac{pcs}, and translate that guarantee into practical shaping algorithms.

Against this background, in this paper, we aim at answering two key questions: 
\begin{enumerate}
\item What codeword-level sidelobe guarantees can \ac{pcs} provide, beyond the ensemble-level \ac{eisl}, and where is the boundary of this capability? 
\item How can practical \ac{pcs} schemes realize stronger sidelobe control while retaining favorable finite-length communication performance? 
\end{enumerate}

Specifically, our main contributions are summarized as follows.
\begin{itemize}
\item We formulate a broad class of existing \ac{pcs} methods as \emph{type-based \ac{pcs}}. This framework encompasses the conceptual forms of \ac{ccdm}, \ac{mpdm}, shell mapping, and \ac{ess}. We show that type-based \ac{pcs} provides codeword-wise \ac{islr} control if and only if the \ac{islr} is invariant to every symbol permutation, which converts the search for a maximum-\ac{islr} guarantee into an enquiry about the structure of the modulation basis.
\item For the \ac{pacf} and $N\geq 7$, we show that all unitary modulation bases that permit type-based codeword-wise \ac{islr} control belong to an explicitly characterized restricted family, with \ac{ofdm} being a representative example. On other other hand, for the \ac{aacf} and $N\geq 4$, no unitary modulation basis permits such control. In this case, type-based \ac{pcs} can thus only control ensemble statistics such as the \ac{eisl}.
\item We propose a method, termed as \emph{sifted shifting}, to extend the individual-codeword guarantee to coherent integration. By selecting a frequency-domain circular shift for each symbol of an $M$-symbol \ac{ofdm} block, the method preserves its type and individual-symbol \ac{pacf}-based \ac{islr}, while guaranteeing an integrated \ac{islr} no larger than $\gamma/M_{\rm eff}$ whenever every \ac{ofdm} symbol satisfies an individual \ac{islr} upper bound $\gamma$, where $M_{\rm eff}$ is the effective integration length and equals $M$ for equal-energy symbols.
\item We introduce the framework quadrant \ac{pas}, which enables joint shaping over square \ac{qam} while retaining uniform sign bits and compatibility with binary \ac{fec}. Within this framework, we adapt \ac{ccdm} and \ac{ess} to \ac{isac}-oriented input distributions. Within the scope of \ac{ess}, we develop the \ac{jhmess} and \ac{sess} constructions, which trade deterministic sidelobe control against rate and memory cost. Numerical simulations suggest that the proposed \ac{pcs} methods yield reduced finite-length rate loss compared to existing methods, under identical \ac{islr} constraints.
\end{itemize}

The remainder of this paper is organized as follows. Sec.~\ref{sec:formulation} formulates the communication and ranging metrics and defines type-based \ac{pcs}. Sec.~\ref{sec:islr} characterizes codeword-wise \ac{islr} controllability for the \ac{pacf} and \ac{aacf}. Sec.~\ref{sec:ss} develops sifted shifting for coherent integration. Sec.~\ref{sec:design} presents the practical \ac{pcs} constructions and their cost-performance tradeoffs. Sec.~\ref{sec:numerical} reports the numerical results, and Sec.~\ref{sec:conclusions} concludes the paper.

\subsubsection*{Notations}
Throughout this paper, $a$, $\V{a}$, $\M{A}$, and $\Set{A}$ represent scalars, vectors, matrices and sets, respectively. The $m$-dimensional vector of zeros (resp. ones) is denoted by $\V{0}_{m}$ (resp. $\V{1}_{m}$). The $m$-by-$m$ identity matrix is denoted by $\M{I}_{m}$. These subscripts are omitted if they are clear from the context. The notation $[\M{A}]_{i,j}$ denotes the $(i,j)$-th entry of matrix $\M{A}$. The notations $|\V{x}|$ and $\sqrt{\V{x}}$ denote the vector containing the entrywise magnitude and the entrywise squared magnitudes of $\V{x}$, respectively. ${\rm{diag}}(\V{x})$ denotes the diagonal matrix constructed by placing $\V{x}$ on its main diagonal. $|\Set{A}|$ denotes the cardinality of the set $\Set{A}$. The notation $\|\V{x}\|_p$ represents the $\ell_p$ norm of the vector $\V{x}$. The direct sum of linear subspaces is denoted by $\Set{A}\oplus\Set{B}$, and ${\rm span}(\cdot)$ denotes the linear space spanned by its arguments.

\section{Problem Formulation}\label{sec:formulation}
\subsection{Model and performance metrics}
In this paper we consider the simplistic \ac{awgn} communication channel model\footnote{For more realistic frequency-selective channels, joint second-order and fourth-order moment allocation methods can be applied to achieve a more favorable sensing-communication performance tradeoff \cite{ccom_kurtosis} Here, we focus on the \ac{awgn} model for clarity of exposition, while the proposed analytical results and methods can also be extended to frequency-selective channels.}
\begin{equation}
\V{y}=\sqrt{\rho}\V{x}+\V{z},
\end{equation}
where $\rho$ denotes the \ac{snr}, and $\V{z}\sim\mathcal{CN}(0,\M{I}_N)$ denotes the complex \ac{awgn}. The transmitted vector $\V{x}$ is obtained by modulating $N$ \ac{iid} symbols via a unitary modulation basis $\M{U}$, namely 
\begin{equation}\label{unitary_modulation}
\V{x}=\M{U}\V{s}=\sum_{n=1}^N s_n\V{u}_n.
\end{equation}
By varying $\M{U}$, the model \eqref{unitary_modulation} can be used to represent a large amount of existing modulation schemes, including \ac{ofdm}, \ac{sc}, \ac{cdma}, \ac{otfs}, and \ac{afdm} \cite{ofdm_optimal}.

The symbols $\{s_n\}$ are drawn from a finite constellation set $\Set{S}$, e.g., \ac{qam} or \ac{psk}, etc. From the communication perspective, a natural design objective is the \ac{mi}, thus yielding the generic \ac{isac}-oriented input distribution optimization problem
\begin{subequations}
\begin{align}
\max_{p_{\V{s}}(\V{s})}&~~I(\V{s};\V{y})\\
{\rm s.t.}&~~p_{\V{s}}(\V{s})\in\Set{P}(\Set{S})\cap\Set{S}_{\rm c}\cap \Set{S}_{\rm s},
\end{align}
\end{subequations}
where $I(\V{s};\V{y})$ represents the \ac{mi} between $\V{s}$ and $\V{y}$, $\Set{P}(\Set{S})$ denotes the set of probabilistic distributions over $\Set{S}$, $\Set{S}_{\rm c}$ denotes the feasible region of sensing-oriented constraints, and $\Set{S}_{\rm c}$ denotes the feasible region of other design constraints (e.g., power constraints). Once an optimal input distribution $p_{\V{s}}(\V{s})$ is solved, one may obtained a finite blocklength approximation of it using \ac{pcs} methods. 

In the context of ranging, sensing performance metrics are generated from \ac{aacf} and \ac{pacf} of the transmitted signal $\V{x}$. In particular, denote the $k$-th shift and periodic shift matrices as
\begin{equation}
\M{J}_k=\begin{bmatrix}
\M{0} & \M{I}_{N-k}\\
\M{0} & \M{0}
\end{bmatrix},\qquad\widetilde{\M{J}}_k=\begin{bmatrix}
\M{0} & \M{I}_{N-k}\\
\M{I}_k & \M{0}
\end{bmatrix},
\end{equation}
respectively, the \ac{aacf} and \ac{pacf} can be expressed as
\begin{equation}
r_k=\V{x}^{\rm H}\M{J}_k\V{x},\qquad\widetilde{r}_k=\V{x}^{\rm H}\widetilde{\M{J}}_k\V{x},
\end{equation}
respectively, for $k=0,\dotsc,N-1$. The widely applied \ac{islr} \cite{isl_good} depicting the inter-target interference can be defined based on either \ac{aacf} or \ac{pacf} as follows
\begin{equation}
{\rm ISLR}=\frac{\sum_{k=1}^{N-1}|a_k|^2}{|a_0|^2},
\end{equation}
where $a_k$ is either $r_k$ or $\widetilde{r}_k$. Note that \ac{islr} depends on a single realization of $\V{x}$ and hence on a single codeword $\V{s}$. When precise control over realizations is infeasible, one may resort to the expectation over the entire codebook. In particular, under a constant power constraint $\mathbb{E}\{|s_n|^2\}=1,~\forall n$, it is reasonable to consider the \ac{eisl} \cite{ofdm_optimal}
\begin{equation}
{\rm EISL} = \sum_{k=1}^{N-1}\mathbb{E}\{|a_k|^2\}.
\end{equation}
As we shall see later, codebook-level control (i.e., over \ac{eisl}) and codeword-level control (i.e., over \ac{islr}) can lead to rather different requirements on the underlying modulation basis and \ac{pcs} methods.

\subsection{Type-based \ac{pcs}}
The type of a length-$N$ codeword $\V{s}$, also referred to as an $N$-type, is defined as
\begin{equation}
\hat{P}_{\V{s}}(a)=\frac{1}{N}\left|\{k:s_k=a\}\right|,~a\in\Set{S}.
\end{equation}
We denote the set of all $N$-types over the constellation set $\Set{S}$ by $\Set{P}_N(\Set{S})$. The corresponding type class given by
\begin{equation}
\Set{T}_N(q)=\{\V{s}\in\Set{S}^N:\hat{P}_{\V{s}}=q\},
\end{equation}
is the set containing all codewords of the same type $q$. A type-based \ac{pcs} method is characterized by a distribution
\begin{equation}\label{type_based_def}
P_N(\V{s})=\frac{\omega_N(q)}{|\Set{T}_N(q)|},~\V{s}\in\Set{T}_N(q),
\end{equation}
with $\omega_N(q)\geq 0$, $\sum_q \omega_N(q)=1$, from which a codeword is drawn by a two-step procedure:
\begin{enumerate}
\item Specify a type $q$ with probability $\omega_N(q)$;
\item Given $q$, draw uniformly from the type class $\Set{T}_N(q)$.
\end{enumerate}

The distribution \eqref{type_based_def} can be determined either directly, or indirectly via constraints imposed on a (potentially vector-valued) type function $g_N:~\Set{P}_N(\Set{S})\rightarrow \mathbb{R}^r$. In particular, denote the feasible region of the value of $g_N$ by $\Set{B}_N\subseteq \mathbb{R}^r$, the corresponding set of allowable codewords is given by
\begin{equation}
\Set{R}_N=\{\V{s}|g_N(\hat{P}_{\V{s}})\in\Set{B}_N\}.
\end{equation}
Observe that
\begin{equation}
\Set{R}_N = \bigcup_{q\in\Set{P}_N(\Set{S})\atop g_N(q)\in\Set{B}_N} \Set{T}_N(q).
\end{equation}
Various existing \ac{pcs} methods, in their conceptual forms, can be expressed as specific examples of type-based \ac{pcs}, as exemplified below.
\begin{enumerate}
\item \emph{\Ac{ccdm}} \cite{ccdm}: The \ac{ccdm} uses a single type. In other words, given a target type $q^\star$, the \ac{ccdm} is characterized by $\omega_N(q^\star)=1$ and $\Set{R}_N=\Set{T}_N(q^\star)$.
\item \emph{\Ac{mpdm}} \cite{mpdm}: The \ac{mpdm} chooses $L$ types $q^{(1)},\dotsc,q^{(L)}$, and select $m_{\ell}$ codewords from the $\ell$-th type as follows
\begin{equation}
\Set{C}_\ell \subseteq \Set{T}_N(q^{(\ell)}),~|\Set{C}_\ell|=m_\ell.
\end{equation}
Given a target type $q^\star$, the weights are thus determined by
\begin{equation}
P_N(\V{s})=\frac{\omega_N(q^{(\ell)})}{|\Set{T}_N(q^{(\ell)})|},~\sum_{\ell=1}^L\omega_N(q^{(\ell)})q^{(\ell)} = q^\star.
\end{equation}
\item \emph{Shell mapping \cite{sm}}: Shell mapping relies on a symbol-wise separable cost function
\begin{equation}
w(\V{s})=\sum_{i=1}^N w(s_i),
\end{equation}
which is indeed a type function since $w(\V{s})=N\langle \hat{P}_{\V{s}},w\rangle$. Given a maximum cost $t^\star$, the codebook of shell mapping is given by
\begin{subequations}
\begin{align}
\Set{R}_N&=\bigcup_{t\leq t^\star}\Set{H}_t, \\
\Set{H}_t&=\{\V{s}:w(\V{s})=t\}=\bigcup_{q:N\langle q,w\rangle=t}\Set{T}_N(q).
\end{align}
\end{subequations}
\item \emph{\Ac{ess}} \cite{ess,ess_short}: The \ac{ess} uses all codewords satisfying certain symbol-wise separable cost constraints in the sense that
\begin{equation}
\Set{R}_N=\bigcup_{q\in\Set{P}_N(\Set{S})\atop \V{c}(q)\preceq \V{c}^\star}\Set{T}_N(q),
\end{equation}
where $\V{c}^\star$ denotes the constraint vectors, and $\V{c}(q)$ denotes the cost vector, whose entries are linear functions of the type $q$.
\end{enumerate}

In practice, \ac{pcs} methods are typically fixed rate, in the sense that the size of the codebook is $2^K$ with $K$ being a positive integer. This would lead to an inevitable trimming of the type classes. Ideally, one may filter the ``best'' $2^K$ codewords from the type classes for a certain purpose, for example, to optimize the sensing performance. However, this is typically impractical since the size of type classes grows exponentially with $N$ \cite{it_cover}. Therefore, we make explicitly the following assumption.
\begin{assumption}[Uniform trimming]\label{asu:uniform}
All type-based \ac{pcs} methods, if have to select $2^K$ codewords from their raw type classes, can only select randomly according to uniform distributions over each type class.
\end{assumption}

\subsection{Known results and problems}
We are now ready to clarify what is known prior to this work and what is to be investigated. In \cite{ofdm_optimal}, it is shown that under the constant-power and rotational-symmetric assumptions on the input distribution, namely
\begin{equation}
\mathbb{E}\{s\}=0,~\mathbb{E}\{|s|^2\}=1,~\mathbb{E}\{s^2\}=0
\end{equation}
that apply to all $s_n$'s, the \ac{eisl} of \ac{aacf} and \ac{pacf} can be expressed as
\begin{subequations}
\begin{align}
\sum_{k=1}^{N-1}\!\mathbb{E}\{|r_k|^2\}\!&=\!\frac{N(N\!-\!\!1)}{2}\!\!+\!\!(\kappa\!-\!2)N\Big(\bigl\|\widetilde{\M{F}}_{2N}\M{U}\bigr\|_4^4\!-\!\frac{1}{2}\Big),\\
\sum_{k=1}^{N-1}\!\mathbb{E}\{|\widetilde{r}_k|^2\}\!&=\!N(N\!-\!1)\!+\!(\kappa\!-\!2)N(\|\M{F}_N\M{U}\|_4^4-1),
\end{align}
\end{subequations}
respectively, where $\M{F}_N$ denotes the $N$-point \ac{dft} matrix, $\widetilde{\M{F}}_{2N}$ contains the first $N$ columns of the $2N$-point \ac{dft} matrix $\M{F}_{2N}$, and $\kappa=\mathbb{E}\{|s|^4\}$ denotes the fourth-order moment of the input distribution. 

Observe that the quantity $\kappa$ serves as a key parameter in adjusting the \ac{eisl}, for a generic modulation basis $\M{U}$. It is further shown that \ac{ofdm}, corresponding to $\M{U}=\M{F}_N^{\rm H}$, yields the optimal \ac{eisl} for \ac{pacf} among all modulation bases \cite{ofdm_optimal}. Aligned with this framework, \ac{pcs} methods have been proposed to adjust the \ac{eisl}-rate tradeoff in \ac{isac} \cite{pcs_isac,pcs_filter,isac_pcs_pas}.

In this paper, we go one step further to investigate the codeword-level control capability of \ac{pcs} methods. In particular, we consider the following problems:
\begin{problem}[\ac{islr} controllability]
For type-based \ac{pcs} methods satisfying Assumption \ref{asu:uniform}, under what conditions, can they provide precise control over codeword-wise \ac{islr}?
\end{problem}
\begin{problem}[Practical \ac{pcs} design]
How can we design practical \ac{pcs} methods yielding favorable rate-\ac{islr} tradeoffs?
\end{problem}

\section{\ac{islr} Controllability}\label{sec:islr}
Let us first give a formal definition of controllability for codeword-wise performance metrics under type-based \ac{pcs} methods.
\begin{definition}[Type-based controllability]
Let $f_{\M{U}}(\V{s})$ be a function of codewords $\V{s}$, induced by the modulation basis $\M{U}$. We say that $f_{\M{U}}(\V{s})$ is type-based controllable, if there exists a function $h_{\M{U}}:\Set{P}_N(\Set{S})\rightarrow \mathbb{R}$ of types, such that
\begin{equation}\label{controllability}
f_{\M{U}}(\V{s})=h_{\M{U}}(\hat{P}_{\V{s}}),\qquad\V{s}\neq \V{0}.
\end{equation}
\end{definition}

Our central theoretical problem can thus be rephrased as investigating the type-based controllability of \ac{islr}. Next, we provide a convenient equivalent condition for the verification of type-based controllability.
\begin{proposition}[Type-based controllability is permutation invariance]\label{prop:perm}
For $\V{s}\in\mathbb{C}^N\backslash\{\V{0}\}$, the following are equivalent:
\begin{enumerate}
\item There exists $h_{\M{U}}$ such that \eqref{controllability} holds;
\item For any permutation matrix $\M{\Pi}$ and non-zero $\V{s}$, it holds that
\begin{equation}
f_{\M{U}}(\M{\Pi}\V{s})=f_{\M{U}}(\V{s}).
\end{equation}
\end{enumerate}
\begin{IEEEproof}
Please refer to Appendix \ref{ssec:proof_perm}.
\end{IEEEproof}
\end{proposition}

The remaining problem is to find conditions for $\M{U}$, under which the \ac{islr} is indeed permutation invariant. 

\subsection{The \ac{pacf} case}
In the \ac{pacf} case, by direct application of the Wiener-Khinchine theorem and the Parseval identity, we obtain immediately
\begin{equation}
{\rm ISLR} = \frac{N\|\M{V}\V{s}\|_4^4}{\|\V{s}\|_2^4}-1,
\end{equation}
where $\M{V}=\M{F}_N\M{U}$. Since the denominator is clearly permutation invariant, our problem can be recast as finding unitary $\M{V}$ satisfying
\begin{equation}\label{perm_sym_v}
\|\M{V}\M{\Pi}\V{s}\|_4 = \|\M{V}\V{s}\|_4,~~ \V{s}\in\mathbb{C}^N,~\forall \M{\Pi}\in\mathscr{S}_N,
\end{equation}
where $\mathscr{S}_N$ denotes the symmetric group over $N$ elements. To proceed, we need the following definition and lemma.
\begin{definition}[Monomial unitary matrices]
A unitary matrix $\M{U}\in\mathbb{U}(N)$ is referred to as a monomial unitary matrix, if it has exactly one non-zero entry in each row and each column. The group constituted by such matrices is denoted by $\mathsf{Mon}(N)$.
\end{definition}
\begin{lemma}[$\ell_4$-isometric unitary is monomial]\label{lem:monomial}
If $\M{U}\in\mathbb{U}(N)$ and $\|\M{U}\V{y}\|_4=\|\V{y}\|_4,~\forall\V{y}\in\mathbb{C}^N$, we have $\M{U}\in\mathsf{Mon}(N)$.
\begin{IEEEproof}
Please refer to Appendix \ref{ssec:proof_monomial}.
\end{IEEEproof}
\end{lemma}

Now, for each permutation matrix $\M{\Pi}$, we define
\begin{equation}
\M{V}_{\M{\Pi}} = \M{V}\M{\Pi}\M{V}^{\rm H}.
\end{equation}
Substituting $\V{s}^\prime=\M{V}\V{s}$ into \eqref{perm_sym_v}, we see that
\begin{equation}
\|\M{V}\M{\Pi}\M{V}^{\rm    H}\V{s}^\prime\|_4=\|\M{V}_{\M{\Pi}}\V{s}^\prime\|_4=\|\V{s}^\prime\|_4,
\end{equation}
thereby $\M{V}_{\M{\Pi}}$ is $\ell_4$-isometric, which further implies that
\begin{equation}
\M{V}\mathscr{S}_N\M{V}^{\rm H}\subseteq \mathsf{Mon}(N),
\end{equation}
according to Lemma \ref{lem:monomial}. This is equivalent to
\begin{equation}
\M{V}_{\M{\Pi}}\V{e}_k = \lambda_{\M{\Pi},k} \V{e}_{\sigma_{\M{\Pi}}(k)},
\end{equation}
where $\sigma_{\M{\Pi}}(k)$ denotes the output index corresponding to $k$ after the rearrangement induced by $\M{\Pi}$, and $\lambda_{\M{\Pi},k}$ satisfies $|\lambda_{\M{\Pi},k}|=1$. Next, let us denote the one-dimensional subspaces (lines) spanned by the rows of $\M{V}$ by $\Set{L}_k$, we have $\Set{L}_k = \M{V}^{\rm H}{\rm span}(\V{e}_k)$, from which we further obtain
\begin{align}
\M{\Pi}\Set{L}_k &= \M{\Pi} \M{V}^{\rm H}{\rm span}(\V{e}_k)\nonumber \\
&=\M{V}^{\rm H}\M{V}_{\M{\Pi}}{\rm span}(\V{e}_k) \nonumber \\
&=\M{V}^{\rm H}{\rm span}(\V{e}_{\sigma_{\M{\Pi}}(k)})=\Set{L}_{\sigma_{\M{\Pi}}(k)}.
\end{align}
This suggests that $\M{V}_{\M{\Pi}}$ is also a rearrangement of the lines $\{\Set{L}_k\}$. Once the lines are characterized, the only remaining degrees of freedom in $\M{V}$ are rearrangements of the rows, and a unimodular scalar multiplied on each of the rows, which is exactly a monomial unitary matrix multiplied from the left. 

Our next task is to characterize the lines. Note that any orbital of the symmetric group action over the lines would span a $\mathscr{S}_N$-invariant subspace. Furthermore, these subspaces are mutually orthogonal, and their direct sum constitutes the entire $\mathbb{C}^N$. The classical representation theory of $\mathscr{S}_N$ \cite{sagan2001symmetric} suggests the following invariant subspace decomposition
\begin{equation}
\mathbb{C}^N = \Set{E}\oplus \Set{E}^\perp,
\end{equation}
where $\Set{E}={\rm span}(\V{1})$, and $\Set{E}^{\perp}$ denotes the orthogonal complement of $\Set{E}$. Furthermore, when $N\geq 3$, it is known that $\Set{E}^{\perp}$ is irreducible \cite{sagan2001symmetric}. Therefore, when $N\geq 3$, the orbitals of the lines $\{\Set{L}_k\}$ would satisfy one of the following two conditions:
\begin{enumerate}[(a)]
\item All $N$ lines form a single orbital;
\item $\Set{E}$ forms a single line orbital, while all remaining $N-1$ lines form an orbital in $\Set{E}^\perp$.
\end{enumerate}

By examining the two conditions, we obtain the following central result.
\begin{theorem}[\ac{pacf} \ac{islr} controllable modulation bases]\label{thm:pacf_controllable}
For $N\geq 7$,\footnote{It is feasible to fully characterize the $N\leq 6$ case. It is deliberately neglected because the analysis is tedious and has limited practical value.} the \ac{islr} is type-based controllable, if and only if the modulation basis $\M{U}$ satisfies
\begin{equation}\label{islr_controllable_u}
\M{U}=\M{F}_N^{\rm H}\M{M}\M{C}_{\alpha,\beta},
\end{equation}
where $\M{M}\in\mathsf{Mon}(N)$, $\alpha,\beta\in[0,2\pi)$, and $\M{C}_{\alpha,\beta}$ is defined by
\begin{equation}
\M{C}_{\alpha,\beta}=\frac{e^{j\alpha}}{N}\V{1}\V{1}^{\rm T}+e^{j\beta}(\M{I}-\V{1}\V{1}^{\rm T}).
\end{equation}
\begin{IEEEproof}
Please refer to Appendix \ref{ssec:proof_pacf_controllable}.
\end{IEEEproof}
\end{theorem}

\begin{remark}
An important special case of modulation bases satisfying \eqref{islr_controllable_u} is \ac{ofdm}, corresponding to $\M{U}=\M{F}_N^{\rm H}\M{M}$ and $\M{C}_{\alpha,\alpha}=e^{j\alpha}\M{I}$. In the general \ac{islr}-controllable case, we have
\begin{equation}
{\rm ISLR}=\frac{\sum_{s\in\Set{S}}q(s)|s+(e^{j(\alpha-\beta)}-1)m_q|^4}{\big(\sum_{s\in\Set{S}}q(s)|s|^2\big)^2}-1,
\end{equation}
where $m_q=\sum_{s\in\Set{S}}q(s)s$. In light of this, if all types $\hat{q}$ used in the codebook satisfy the exact zero-mean condition $m_{\hat{q}}=0$, all \ac{islr}-controllable modulation bases would yield the same \ac{islr} values as \ac{ofdm} does.
\end{remark}

\subsection{The \ac{aacf} case}
In the \ac{aacf} case, we denote
\begin{equation}
A_{\rm a}(\V{s}) = \sum_{k=1}^{N-1} |r_k|^2,~~{\rm ISLR}=\frac{A_{\rm a}(\V{s})}{\|\V{s}\|_2^2}.
\end{equation}
If \ac{islr} is permutation invariant, for any permutation matrix $\M{\Pi}$, the conjugate action $\M{U}_{\M{\Pi}}=\M{U}\M{\Pi}\M{U}^{\rm H}$ would preserve $A_{\rm a}$, in the sense that $A_{\rm a}(\V{s})=A_{\rm a}(\M{U}_{\M{\Pi}}\V{s})$. Next, we give a characterization of this conjugate action through the following lemmata.
\begin{lemma}\label{lem:monomial_zeros}
$A_{\rm a}(\V{s})=0$ if and only if $\V{s}$ has a single non-zero entry.
\begin{IEEEproof}
Please refer to Appendix \ref{ssec:proof_monomial_zeros}.
\end{IEEEproof}
\end{lemma}

Since $\M{U}_{\M{\Pi}}$ preserves $A_{\rm a}$, it would also preserve the set of zeros characterized by Lemma \ref{lem:monomial_zeros}. Thus for each $\V{e}_k$, we have 
\begin{equation}\label{monomial_result}
\M{U}_{\M{\Pi}}\V{e}_k=\lambda_k\V{e}_{\sigma(k)}, 
\end{equation}
where $|\lambda_k|=1$, while the unitarity of $\M{U}_{\M{\Pi}}$ ensures that $\sigma(k)$ is a permutation. Therefore, Lemma \ref{lem:monomial_zeros} implies that $\M{U}_{\M{\Pi}}$ is monomial.

\begin{lemma}\label{lem:reversal}
The permutation $\sigma(k)$ is either an identity transformation or a reversal permutation.
\begin{IEEEproof}
Please refer to Appendix \ref{ssec:proof_reversal}.
\end{IEEEproof}
\end{lemma}

Lemma \ref{lem:reversal} determines the permutation component of the monomial transformation $\M{U}_{\M{\Pi}}$. Next we investigate the phase component.
\begin{lemma}\label{lem:ramp}
The coefficient $\lambda_k$ in \eqref{monomial_result} satisfies
\begin{equation}
\lambda_k = \lambda_0 q^k~{\rm or}~\lambda_{N-1}q^{N-k-1},
\end{equation}
where $|\lambda_0|=|q|=1$.
\begin{IEEEproof}
Please refer to Appendix \ref{ssec:proof_ramp}.
\end{IEEEproof}
\end{lemma}

Now, by combining Lemmata \ref{lem:monomial_zeros}, \ref{lem:reversal} and \ref{lem:ramp}, we arrive at the following result.
\begin{proposition}
When $N\geq 2$, all unitary matrices $\M{U}$ satisfying
\begin{equation}
A_{\rm a}(\M{U}\V{s})=A_{\rm a}(\V{s}),\qquad\forall\V{s}\in\mathbb{C}^N
\end{equation}
constitute the set
\begin{equation}
\Set{G}_{N} = \{a\M{D}(q)\M{R}^b:|a|=|q|=1,~b\in\{0,1\}\},
\end{equation}
where 
$$
\M{D}(q) = {\rm diag}(1,q,\dotsc,q^{N-1}),
$$
and $\M{R}$ is the reversal permutation matrix.
\end{proposition}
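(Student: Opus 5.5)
The plan has two parts: show that every element of $\Set{G}_N$ preserves $A_{\rm a}$ (sufficiency), and show that every unitary preserving $A_{\rm a}$ lies in $\Set{G}_N$ (necessity), using Lemmata \ref{lem:monomial_zeros}, \ref{lem:reversal} and \ref{lem:ramp}, which apply to any unitary preserving $A_{\rm a}$ and not only to the conjugate actions $\M{U}_{\M{\Pi}}$.

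\textbf{Sufficiency.} We check the three generators separately; since $\Set{G}_N$ is closed under composition, this suffices.
\begin{enumerate}
\item A global unimodular scalar $a$ gives $r_k(a\V{s})=|a|^2r_k(\V{s})=r_k(\V{s})$.
\item For $\M{D}(q)$, the $k$-th aperiodic correlation of $\M{D}(q)\V{s}$ is $\sum_n \overline{s_n q^{n}} s_{n+k}q^{n+k}=q^k r_k(\V{s})$, so every $|r_k|$ is unchanged.
\item For the reversal $\M{R}$, one has $r_k(\M{R}\V{s})=\sum_n \overline{s_{N-1-n}}s_{N-1-n-k}=\overline{r_k(\V{s})}$, again preserving every $|r_k|$.
\end{enumerate}
Hence $A_{\rm a}(\M{G}\V{s})=A_{\rm a}(\V{s})$ for every $\M{G}\in\Set{G}_N$.

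\textbf{Necessity.} Let $\M{W}$ be unitary with $A_{\rm a}(\M{W}\V{s})=A_{\rm a}(\V{s})$ for all $\V{s}$. By Lemma \ref{lem:monomial_zeros}, $\M{W}$ maps the zero set of $A_{\rm a}$ (vectors with a single nonzero entry) onto itself, so $\M{W}\V{e}_k=\lambda_k\V{e}_{\sigma(k)}$ and $\M{W}$ is monomial. Lemma \ref{lem:reversal} forces $\sigma$ to be the identity or the reversal, so $\M{W}=\M{\Lambda}\M{R}^b$ with $\M{\Lambda}$ diagonal unimodular and $b\in\{0,1\}$. Multiplying on the right by $\M{R}^b$, which preserves $A_{\rm a}$ by the sufficiency step, reduces to the case $b=0$, i.e., $\M{W}=\M{\Lambda}$. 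Lemma \ref{lem:ramp} then gives $\lambda_k=\lambda_0q^k$ (the other branch corresponds to the reversal and is absorbed into $\M{R}^b$), so $\M{W}=\lambda_0\M{D}(q)\M{R}^b\in\Set{G}_N$.

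\textbf{Main obstacle.} The substantive step is the phase ramp, should we need to justify it directly rather than cite Lemma \ref{lem:ramp}. I would test $A_{\rm a}$ on two-sparse vectors $\V{e}_i+t\V{e}_j$. The only nonzero lag is $|j-i|$, so $A_{\rm a}$ is invariant under any phases there, which gives no information. Three-sparse probes $\V{e}_i+\V{e}_{i+d}+\V{e}_{i+2d}$ do constrain the phases: the lag-$d$ correlation is the coherent sum $\overline{\lambda_i}\lambda_{i+d}+\overline{\lambda_{i+d}}\lambda_{i+2d}$, whose modulus is preserved only if $\lambda_{i+d}^2=\lambda_i\lambda_{i+2d}$. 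Taking $d=1$ shows that the ratios $\lambda_{k+1}/\lambda_k$ are constant, which yields the geometric ramp for $N\geq3$. The case $N=2$ is trivial, since any diagonal unimodular matrix is of the form $a\M{D}(q)$ there.
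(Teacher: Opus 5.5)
Your proposal is correct and follows the paper's route: monomiality from the zero set of $A_{\rm a}$ (Lemma~\ref{lem:monomial_zeros}), identity-or-reversal for the permutation (Lemma~\ref{lem:reversal}), and the geometric phase ramp (Lemma~\ref{lem:ramp}). You also check explicitly the sufficiency direction, which the paper leaves implicit, and your three-point probe $\V{e}_i+\V{e}_{i+1}+\V{e}_{i+2}$, which forces $\lambda_{i+1}^2=\lambda_i\lambda_{i+2}$, is a valid and more elementary substitute for the coefficient matching used in the paper's proof of Lemma~\ref{lem:ramp}.
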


We may now conclude that the type-based controllability of \ac{aacf} \ac{islr} is equivalent to
\begin{equation}\label{aacf_islr_control}
\M{U}\mathscr{S}_N\M{U}^{\rm H} \subseteq \Set{G}_N.
\end{equation}
It turns out that this is impossible except for very small $N$.

\begin{theorem}[\ac{aacf} \ac{islr} is not type-based controllable]\label{thm:nogo_aislr}
When $N\geq 4$, there does not exist any unitary $\M{U}$ such that \ac{aacf} \ac{islr} is type-based controllable.
\begin{IEEEproof}
Please refer to Appendix \ref{ssec:proof_nogo_aislr}.
\end{IEEEproof}
\end{theorem}

Theorem \ref{thm:nogo_aislr} suggest that in the \ac{aacf} case, under any modulation basis, \ac{pcs} methods can only provide sidelobe control in the expectation sense, namely adjusting the \ac{eisl}, but cannot provide any codeword-level \ac{islr} guarantee. In light of this, in the rest of this paper, we focus on the \ac{pacf}.

\section{Coherent Integration via Sifted Shifting}\label{sec:ss}
In \cite{iceberg}, it is shown that conventional coherent integration over $M$ samples would yield \ac{eisl} reduction by a factor of $1/M$. In this section, we present a modified coherent integration technique for \ac{ofdm}, which provides a guaranteed $1/M$ \ac{pacf} \ac{islr} reduction when applied in conjunction with \ac{islr}-controllable type-based \ac{pcs} methods.

Broadly speaking, for each \ac{ofdm} symbol, we choose a specific circular shift $\nu$ in the frequency domain, which does not alter the \ac{islr} of each \ac{ofdm} symbol, but provides an $1/M$ reduction for the ensemble \ac{islr} over all $M$ \ac{ofdm} symbols. Note that the codeword-level \ac{islr} invariance of this technique follows directly from the fact that \ac{islr} is permutation invariant and that any circular shift is a permutation.

In particular, let us denote the \ac{pacf} sidelobe at the $k$-th delay bin of the $m$-th \ac{ofdm} symbol as $\widetilde{r}_{k,m}$, and collect them in a vector as
\begin{equation}
\V{r}_m = [r_{1,m},\dotsc,r_{N-1,m}]^{\rm T}.
\end{equation}
Accordingly, we denote
\begin{equation}
A_{{\rm p},m} = \|\V{r}_m\|_2^2,~~{\rm ISLR}_m = \frac{A_{{\rm p},m}}{\|\V{s}_m\|_2^4}.
\end{equation}
Next, denote the frequency-domain circular shift operator as
\begin{equation}
\M{Q}_\nu = {\rm diag}(e^{j2\pi \nu/N},\dotsc, e^{j2\pi (N-1)\nu/N}),
\end{equation}
the sidelobe vector after circular shift is given by
\begin{equation}
\V{r}_m^{(\nu)} = \M{Q}_{\nu} \V{r}_m.
\end{equation}
For any non-zero delay $k$, we have
\begin{equation}
\sum_{\nu=0}^{N-1}e^{j2\pi\nu/N} = \frac{1-e^{j2\pi\nu}}{1-e^{j2\pi\nu/N}} = 0,
\end{equation}
and hence
\begin{equation}
\sum_{\nu=0}^{N-1} \M{Q}_\nu = \M{0}.
\end{equation}

Now, assume that the first $m-1$ \ac{ofdm} symbols have been integrated and yield the integrated sidelobe vector $\V{q}^{(m-1)}$, while the $m$-th sidelobe vector $\V{r}_m$ is to be integrated. If we apply a frequency-domain circular shift on $\V{r}_m$, we have
\begin{equation}
\|\V{q}^{(m-1)}+\M{Q}_{\nu}\V{r}_m\|_2^2 = \|\V{q}^{(m-1)}\|_2^2+\|\V{r}_m\|_2^2+2c(\nu),
\end{equation}
with the cross term being
\begin{equation}
c(\nu) = {\rm Re}((\V{q}^{(m-1)})^{\rm H}\M{Q}_{\nu}\V{r}_m).
\end{equation}
Averaging over all shifts, we have
\begin{equation}
\frac{1}{N} \sum_{\nu=0}^{N-1} c(\nu) \!=\! {\rm Re}\left\{(\V{q}^{(m\!-\!1)})^{\rm H}\left(\frac{1}{N}\sum_{\nu=0}^{N-1}\M{Q}_{\nu}\V{r}_m\right)\right\}\!=\!0.
\end{equation}
Therefore, there exists at least one $\nu_\star$, such that
\begin{equation}
c(\nu_\star)\leq 0,
\end{equation}
otherwise the shift-averaged value cannot be zero. In this regard, we may choose an appropriate circular shift $\nu_m$ for every \ac{ofdm} symbol, such that for the complete set of $M$ \ac{ofdm} symbols we have
\begin{equation}
\|\V{q}^{(M)}\|_2^2=\left\|\sum_{m=1}^M\M{Q}_{\nu_m}\V{r}_m\right\|\leq \sum_{m=1}^M \|\V{r}_m\|_2^2.
\end{equation}
On the other hand, the mainlobe is given by $\sum_{m=1}^M\|\V{s}_m\|_2^2$. Therefore, the overall \ac{islr} is bounded by
\begin{align}
{\rm ISLR} &= \frac{\|\V{q}^{(M)}\|_2^2}{\left(\sum_{m=1}^M\|\V{s}_m\|_2^2\right)^2}\nonumber \\
&\leq \frac{\sum_{m=1}^M \|\V{s}_m\|_2^2\cdot {\rm ISLR}_m}{\left(\sum_{m=1}^M\|\V{s}_m\|_2^2\right)^2}.
\end{align}
If for each \ac{ofdm} symbol, we use an \ac{islr}-controllable \ac{pcs} to ensure ${\rm ISLR}_m\leq \gamma$,  we see that 
\begin{equation}
{\rm ISLR} \leq \frac{\gamma}{M_{\rm eff}},
\end{equation}
where $M_{\rm eff}=\frac{\left(\sum_{m=1}^M\|\V{s}_m\|_2^2\right)^2}{\sum_{m=1}^M\|\V{s}_m\|_2^2}$ is the effective number of integrated \ac{ofdm} symbols, which equals to $M$ when all the \ac{ofdm} symbols have identical energy. We call the aforementioned technique ``sifted shifting'', since all the frequency-domain circular shifts are simply sifted from $\{0,\dotsc, N-1\}$. The algorithm is summarized in Algorithm \ref{alg:ss}, where the centered power spectra are computed to filter out the mainlobe. 

\begin{remark}[Complexity and overhead]
Note that the entire $\widetilde{c}_m[\nu]$ sequence is a length-$N$ \ac{pacf}, and hence can be obtained by a single \ac{fft} and its inverse. Therefore, the overall complexity of the algorithm is on the order of $O(MN\log N)$. From the communication perspective, in order to inform the communication receiver with the circular shifts, $\lceil\log_2 N\rceil$ additional bits would have to be consumed for each \ac{ofdm} symbol.
\end{remark}

\begin{algorithm}[t]
\caption{The sifted shifting algorithm.}
\label{alg:ss}
\renewcommand{\algorithmicrequire}{\textbf{Input:}}
\renewcommand{\algorithmicensure}{\textbf{Output:}}
\begin{algorithmic}[1]
\REQUIRE \ac{ofdm} symbols $\{\V{x}_m\}_{m=1}^M$ to be transmitted, and the number of \ac{ofdm} symbols $M$.
\FOR{$m=1$ to $M$}
\STATE Compute $\V{p}_m=(\M{I}-\frac{1}{N}\V{1}\V{1}^{\rm T})|\V{x}_m|^2$;
\ENDFOR
\STATE $\nu_1=0$, $\V{p}_{\rm acc}=\V{p}_1$;
\FOR{$m=2$ to $M$}
\STATE Compute $\widetilde{c}_m[\nu]=\V{p}_{\rm acc}^{\rm H}\M{T}_\nu\V{p}_m$, where $\M{T}_\nu=\M{F}_N^{\rm H}\M{Q}_{\nu}\M{F}_N$;
\STATE Choose $\nu_m = \mathop{\arg\min}_{\nu}~\widetilde{c}_m[\nu]$;
\STATE Update $\V{p}_{\rm acc}\leftarrow \V{p}_{\rm acc}+\M{T}_{\nu_m}\V{p}_m$;
\ENDFOR
\ENSURE Shifted \ac{ofdm} symbols $\{\M{T}_{\nu_m}\V{x}_m\}_{m=1}^M$.
\end{algorithmic}
\end{algorithm}

\section{Practical \ac{isac}-Oriented \ac{pcs} Design}\label{sec:design}
\ac{pcs} was originally designed for achieving the capacity of \ac{awgn} channel under merely power constraints. Under finite constellations, the optimal input distribution is known to be the \ac{mb} distribution \cite{mb}, which is a discretized version of the zero-mean circularly symmetric complex Gaussian distribution.

One central problem associated with \ac{pcs} is that how it could be amalgamated with \acp{fec}, which is typically addressed under the framework of \ac{pas} \cite{pas}. In particular, standard \ac{pas} applies \ac{pcs} on the amplitudes of the in-phase and the quadrature (I/Q) components of \ac{qam} constellations, while leave the two sign bits out for conveying the parity bits in \ac{fec} \cite{pas}. Such a construction is valid because: 1) the parity bits in systematic \acp{fec} are approximately uniform, and 2) the \ac{mb} distribution has mutually independent I/Q components. 

However, in the context of \ac{isac}, Pareto optimal input distributions do not in general factorize to independent I/Q components. In this regard, existing works developed two classes of techniques \cite{isac_pcs_pas}. The first is to force the input distribution to factorize, which turns out to be inefficient when the sidelobe level constraints are stringent. The second, termed as \ac{gpas}, uses \ac{apsk} constellations with equal number of points on each ring, and put the parity bits on the phase of the symbols. Unfortunately, \ac{gpas} is that it is not compatible with \ac{qam}. Moreover, its equal-point ring structure, although enables phase parity bits embedding, renders the Euclidean distance distribution of the constellation less favorable, yielding potential degradation to the communication performance. 

Against this background, we would like to highlight that:
\begin{remark}[Quadrant \ac{pas} for \ac{qam}]
Parity bits can still be embedded on the signs of the I/Q components in \ac{qam}, since \ac{qam} constellations are \emph{quadrant symmetric}. To elaborate, one could apply \ac{pcs} to the constellation points in the first quadrant, while using the two uniform sign bits to indicate which actual quadrant that the symbol is mapped into. In this way, \ac{qam} are made compatible with \ac{pas}.
\end{remark}

The convenience of quadrant \ac{pas} comes with the cost that the number of constellation points in the first quadrant is considerably larger than that in the positive half of an I/Q axis. Under a given finite blocklength, it is known that distribution having larger supports will incur larger rate loss, measured by
\begin{equation}\label{rate_loss_def}
L = H(p_{\rm ref})-\frac{k}{N_{\rm b}},
\end{equation}
where $H(p_{\rm ref})$ denotes the entropy of the reference distribution (e.g., the one maximizing the \ac{mi}), $N_{\rm b}$ denotes the blocklength (in bits), and $k$ denotes the number of bits that is achievable by the \ac{pcs} distribution matcher. 

In this rest of this section, we discuss the specific designs of \ac{pcs} methods.

\subsection{\ac{ccdm}}
The \ac{ccdm} matches the reference input distribution using a single integer type, typically via arithmetic coding \cite{ccdm}. According to the discussion in Sec.~\ref{sec:islr}, ideal \ac{ccdm} can precisely control the codeword-wise \ac{islr}. Nevertheless, in practice, the types have to take integer values. If one persists in imposing hard \ac{islr} constraints, the type has to be designed through the following optimization problem
\begin{subequations}\label{klm_wc}
\begin{align}
\min_{\hat{P}\in\Set{P}_N(\Set{S})}&~~D_{\rm KL}(\hat{P}\|p)\\
{\rm s.t.}&~~\frac{\sum_{k}\hat{P}(k)|\widetilde{s}_k|^4}{\left(\sum_{k}\hat{P}(k)|\widetilde{s}_k|^2\right)^2}\leq 1+\gamma, \label{islr_type}\\
&~~p(k)=0\Rightarrow\hat{P}(k)=0,~\forall k,
\end{align}
\end{subequations}
where $\Set{S}=\{\widetilde{s}_1,\dotsc,\widetilde{s}_K\}$, $p$ denotes the reference distribution, and $D_{\rm KL}(\cdot \|\cdot)$ denotes the \ac{kld}. In general, \eqref{klm_wc} is an extremely difficult combinatorial optimization problem. A practical alternative is thus neglect the \ac{islr} constraint \eqref{islr_type}, for which the classical divergence-optimal $N$-type quantization algorithm is guaranteed to yield the optimal solution \cite{idquant}. Consequently, \ac{ccdm} would yield a single \ac{islr} value for all admissible codewords, but the resulting \ac{islr} value may not obey the design constraints. 

The natural reference distribution for \ac{ccdm} is the \ac{mi}-maximizing input distribution, the solution to the following problem
\begin{subequations}\label{pmi}
\begin{align}
\max_{p_s(s)\in\Set{P}(\Set{S})}&~~I(s;\sqrt{\rho}s+z)\\
{\rm s.t.}&~~\sum_{s\in\Set{S}}p_s(s)|s|^2\leq P_{\rm max},\label{mi_cons1}\\
&~~\frac{\sum_{s\in\Set{S}}p_s(s)|s|^4}{\left(\sum_{s\in\Set{S}}p_s(s)|s|^2\right)^2}-1\leq \kappa_{\max},\label{mi_cons2}
\end{align}
\end{subequations}
where $z\sim\mathcal{CN}(0,1)$. In general, this input distribution is quadrant-symmetric but exhibits no further symmetry, thereby yielding larger finite blocklength rate loss. In the hope of reducing the loss, one might wish to try and confining the input distribution to the exponential family, i.e.,
\begin{subequations}\label{cons_mi}
\begin{align}
\max_{\alpha,\beta}&~~I(s;\sqrt{\rho}s+z)\\
{\rm s.t.}&~~p_s(s)=\frac{1}{Z_s}\exp(-\alpha|s|^2-\beta|s|^4), \label{exp_family_form}\\
&~~\eqref{mi_cons1},~\eqref{mi_cons2},
\end{align}
\end{subequations}
which guarantees that the resulting distribution assigns equal probabilities to constellation points of the same magnitude. In what follows we shall refer to this property as ``shell degeneracy'', since all constellation points on the same constant-magnitude shell are equal-probable. Unfortunately, \ac{ccdm} by its nature cannot exploit shell degeneracy efficiently, since the sizes of each shell are in general different. Consequently, one would have to apply inconvenient variable-length coding for the phase degree of freedom. In other words, practical \ac{ccdm} cannot reduce the rate loss caused by the large size of the quadrant set.

\subsection{Joint hard-moment \ac{ess}}\label{ssec:jhmess}
In contrast to \ac{ccdm}, \ac{ess} does not explicitly match type with reference distributions, but rather confine all codewords into a set satisfying certain sample moment constraints. In the context of \ac{isac}, we consider the following constraints
\begin{subequations}\label{ess_cons}
\begin{align}
D_N&=\sum_{i=0}^{N-1} |s_i|^2 \leq P_{\max},\label{d_cons}\\
F_N&=\frac{\sum_{i=0}^{N-1} |s_i|^4}{D_N^2}-1\leq \kappa_{\max}. \label{kurtosis_cons}
\end{align}
\end{subequations}
\ac{ess} further assigns uniform probabilities to all legitimate codewords. Using the method of types argument \cite{it_cover}, when $N$ is sufficiently large, the marginal distribution of the codebook would converge to the maximal entropy distribution solving the following problem
\begin{subequations}\label{pme}
\begin{align}
\max_{p_s(s)\in\Set{P}(\Set{S})}&~~H(s)\\
{\rm s.t.}&~~\mathbb{E}\{|s|^2\}\leq P_{\max},\label{me_second}\\
&~~\mathbb{E}\{|s|^4\}\leq (1+\kappa_{\max})\mathbb{E}\{|s|^2\}.\label{me_fourth}
\end{align}
\end{subequations}
Under the same $(P_{\max},\kappa_{\max})$ constraints, the maximal entropy solution to \eqref{pme} is typically different from the one solving \eqref{pmi}, except for the large \ac{snr} limit. In order to improve the achievable \ac{mi}, we propose to first solve the constrained \ac{mi} maximization problem \eqref{cons_mi}, obtaining the optimal $(P_\star,\kappa_\star)$, and then replace the $(P_{\max},\kappa_{\max})$ in the \ac{ess} constraints \eqref{ess_cons} with $(P_\star,\kappa_\star)$. This is reasonable since the exponential family taking the form \eqref{exp_family_form} is exactly the family of entropy-maximizing distributions satisfying \eqref{me_second} and \eqref{me_fourth}. Although the problem \eqref{cons_mi} is non-convex with respect to the optimization parameters $(\alpha,\beta)$, a two-dimensional search can determine the parameters directly. 

Regarding \ac{islr}, note that all \ac{ess}-permissible codewords satisfy the hard \ac{islr} constraint \eqref{kurtosis_cons}. Therefore, \ac{ess} with an exact implementation is strictly \ac{islr} controllable. We shall refer to such exact implementations as \ac{jhmess} here after. Next, we discuss how \ac{isac}-oriented \ac{ess} could be implemented with low complexity, possibly with approximations.

\subsubsection{Exploiting shell degeneracy}
The structure of \ac{ess} makes it natural to exploit the shell degeneracy. To elaborate, in a similar spirit as that of K-ESS \cite{kess}, in \ac{jhmess} we construct a two-dimensional trellis, which is a rooted directed acyclic graph on which each path from the root to the leaves gives a set of legitimate codewords. Formally, the trellis is defined as a layered weighted graph $(\Set{V},\Set{E},\Set{W})$, where each vertex (also known as \emph{state}) is a triple such that
\begin{equation}
\Set{V} = \bigcup_{n=0}^N\{n\}\times \Set{L}_n,
\end{equation}
where $n$ denotes the layer index, while $\Set{L}_n$ denotes the set of reachable states in the $n$-th layer, recursively defined as
\begin{align}
\Set{L}_0 &= \{(0,0)\},\nonumber \\
\Set{L}_{n+1}&=\{(U+u_s,V+u_s^2):(U,V)\in\Set{L}_n,s\in\Set{S}\},
\end{align}
with the compressed state increment $u_s$ being defined as $u_s=\frac{|s|^2-2}{8}\in\mathbb{Z}_+,~\forall s\in\Set{S}$, since the energies of all \ac{qam} constellation points are multiples of $8$ plus $2$. Under this representation we have an one-to-one mapping between $(U_n,V_n)$ and $(D_n,F_n)$
\begin{equation}\label{dfuv}
D_n = 2n+8U_n,~~F_n = 4n+32U_n+64V_n.
\end{equation}
For each energy shell, we define the directed edge
$$
(n,U,V)\rightarrow (n+1,U+u_s,V+u_s^2),
$$
with the edge weight being the multiplicity of degeneracy 
$$
g_u=|\{s:s\in\Set{S},{\rm Re}(s)>0,{\rm Im}(s)>0,|s|^2-2=8u\}|.
$$
In conventional \ac{ess}, all edge weights are $1$, and each path corresponds to a single codeword. By contrast, in \ac{jhmess}, each path contributes a set of legitimate codewords whose cardinality is the product of edge weights along the path. 

Similar to conventional \ac{ess}, each state $(n,U,V)$ also records $C_n(U,V)$, the number of paths spanning from it self to all leaf nodes, which can be computed by a backward cardinality recursion from the leaf nodes as follows
\begin{equation}
C_N(U,V)\!=\!1,~C_n(U,V)\!=\!\sum_{u\in\Set{\overline{S}}}g_uC_{n+1}(U\!+\!u,V\!+\!u^2),
\end{equation}
where $\overline{S}=\{u:s\in\Set{S},|s|^2-2=8u\}$. Thus the encoding and decoding process can remain unchanged compared to the conventional \ac{ess}, except that each state would now lead to multiple equal-cost branches. 

We would like to highlight that, although the aforementioned degeneracy-aware technique greatly reduces the memory cost regarding trellis construction, it cannot reduce the finite blocklength rate loss. The reason is simple: the degeneracy-aware technique is merely a compressed representation of the same codebook. Nevertheless, \ac{ess} by its nature is known to yield less rate loss than \ac{ccdm} due to the fact that it employs all type classes satisfying the cost constraints, while \ac{ccdm} only uses one \cite{ess_short2}.

\subsubsection{Trellis construction}
In the conventional \ac{ess}, the construction of the trellis is straightforward, since one only needs to verify at each state if the current energy plus the minimal possible suffix energy is covered by the energy budget. For \ac{qam} constellations, since the minimal energy per symbol is $1$, this amounts to the simple pruning rule
\begin{equation}\label{ess_prune}
D_n+(N-n)\leq P_{\rm max}.
\end{equation}
Paths violating \eqref{ess_prune} would be pruned. However, in \ac{jhmess}, the $U$ and $V$ constraints are mutually coupled and has to be treated with extra care. In particular, for each layer-$n$ state, the suffix length is $r=N-n$, for which we define the minimal $V$-increment under a given $U$-increment $w$ as
\begin{equation}
m_r(w) = \min_{a_1,\dotsc,a_r\in\Set{S}\atop\sum_{i}u_{a_i}=w}\sum_{i=1}^r u_{a_i}^2.
\end{equation}
According to \eqref{dfuv}, given a fixed prefix $U$ and a candidate $w$, the minimal $F_N$ at layer $N$, denoted by $F$, is given by
\begin{equation}
F = F_{0,w} + 64V,
\end{equation}
where $F_{0,w}=4N + 32(U+w) + 64m_r(w)$, from which we may derive the pruning threshold:
\begin{equation}
V_n^{\max}(U)=\max_{w:U+w\leq U_{\max}} B_n(U,w),
\end{equation}
where $B_n(U,w)= \frac{\kappa_{\max}D_w^2 - NF_{0,w}}{64N}$, $D_w = 2N+8(U+w)$, and $U_{\max} = \frac{D_{\max}-2n}{8}$. Any state $(n,U,V)$ that surpass the threshold, namely $V>V_n^{\max}(U)$, will be pruned in the sense that it does not leads to further branches.

\subsubsection{Finite precision}
The finite precision implementation of the conventional \ac{ess} \cite{fpess} can be directly applied to \ac{jhmess} for a further reduction of memory cost, which will introduce a slight additional rate loss. Note that such a modification will not affect the \ac{islr} controllability, since it in effect only deletes certain legitimate codewords from the original codebook.

\subsection{Scalarized \ac{ess}}
Although the techniques presented in Sec.~\ref{ssec:jhmess} reduces the complexity of \ac{jhmess}, the memory cost can still be substantial for relatively large blocklength $N$, because the trellis spans over the large $(D_N,F_N)$-plane. In this subsection, we propose a further simplification, the \ac{sess}, reducing the memory cost significantly by merging the two constraints into a single scalar constraint, at the price of losing \ac{islr} controllability.

To elaborate, in \ac{sess}, we first determine the reference input distribution $p^\star_s(s)$ by solving the constrained \ac{mi} maximization problem \eqref{cons_mi}, and obtain the parameters $\alpha$ and $\beta$. Note that $p_s^\star(s)\propto\exp(-\alpha|s|^2-\beta|s|^4)$ is also the entropy-maximizing distribution under the moment constraints $\mathbb{E}\{|s|^2\}=P_\star$ and $\mathbb{E}\{|s|^4\}=\kappa_\star P_\star^2$. On the other hand, by defining the scalarized cost
\begin{equation}
c(s)=\alpha|s|^2+\beta|s|^4,
\end{equation}
we see that the entropy-maximizing distribution under the cost-matching constraint $\mathbb{E}\{c(s)\}\leq \mathbb{E}_{p^\star}\{c(s)\}:=c_{\max}$ is given by
\begin{equation}
p_s(s)\propto \exp(-c(s)),
\end{equation}
which is exactly $p^\star_s(s)$ itself. This motivates us to modify the conventional \ac{ess} by replacing its energy cost with the scalarized joint cost
\begin{equation}
c_{\rm S-ESS}(\V{s}) = \sum_{i=0}^{N-1}\alpha|s_i|^2+\beta|s_i|^4 = \alpha D_N+\beta F_N.
\end{equation}
In the large-$N$ limit, due to the uniqueness of the entropy-maximizing distribution, the \ac{sess} would yield identical input distribution as that of \ac{jhmess}.

However, from the perspective of \ac{islr} controllability, the \ac{sess} only guarantees
\begin{equation}
\alpha D_N+\beta F_N\leq Nc_{\max},
\end{equation}
which represents a half space in the $(D_N,F_N)$-plane, rather than the moment feasible region given by \eqref{d_cons} and \eqref{kurtosis_cons}. Therefore, \ac{sess} does not provide \ac{islr} controllability.

\begin{table}[t]
    \centering
    \caption{Comparison between \ac{isac}-oriented \ac{pcs} methods.}
    \begin{tabular}{|c|c|c|c|c|}
    \hline   \textbf{Method} & \makecell[c]{\textbf{Memory}\\\textbf{Cost}} & \makecell[c]{\textbf{Codec} \\ \textbf{Time cost}} & \makecell[c]{\textbf{Sidelobe} \\ \textbf{Controllability}} & \makecell[c]{\textbf{Rate}\\ \textbf{Loss}}\\
    \hline \ac{ccdm} & $O(N)$ & $O(N|\Set{S}|)$ & \makecell[c]{Fixed \ac{islr},\\but uncontrollable} & \Neutrey[1.3][yellow] \\
    \hline \ac{jhmess} & $O(N^3|\Set{S}|^3)$ & $O(N|\Set{S}|)$ & \ac{islr} controllable & \Smiley[1.3][green]\\
    \hline \ac{sess} & $O(N^2|\Set{S}|^2)$ & $O(N|\Set{S}|)$ & Only \ac{eisl} & \Smiley[1.3][green]\\
    \hline 1D-ESS & $O(N^2|\Set{S}|)$ & $O(N|\Set{S}|^{\frac{1}{2}})$ & Only \ac{eisl} & \Sadey[1.3][orange] \\
    \hline
    \end{tabular}
    
    \label{tab:comparison}
\end{table}

\subsection{Cost-performance analysis}\label{ssec:cpa}
Now that we have proposed three \ac{isac}-oriented \ac{pcs} schemes, in this subsection, we discuss their advantages and drawbacks. For a brief summary, please refer to Table~\ref{tab:comparison}.

\subsubsection{Memory cost}
For \ac{ccdm}, the only memory cost is the storage for each codeword, therefore is on the order of $O(N)$. For the family of \ac{ess}, the memory cost comes from the number of states in the trellis. In particular, for \ac{jhmess}, the number of states can be estimated as\footnote{We consider finite-precision implementation for all variants of \ac{ess}.}
\begin{equation}
|\Set{V}| = \sum_{n=0}^N O(nP_{\max})\times O(nP_{\max}^2)=O(N^3|\Set{S}|^3),
\end{equation}
since the number of states in each layer is on the order of the product of the energy range and the fourth-order moment range. The cubic memory cost of \ac{jhmess} can be huge for large constellations and long blocks. For \ac{sess}, the order of memory cost can be estimated as
\begin{equation}
|\Set{V}|= \sum_{n=1}^{N} \sum_{i=1}^n O(P_{\max})+ O(P_{\max}^2) = O(N^2|\Set{S}|^2).
\end{equation}
As a benchmark, we also consider the \ac{ess} implementation of the ``conventional \ac{pas}'' method in \cite{isac_pcs_pas} that force the input distribution to factorize, referred to as \emph{1D-ESS} hereafter. In 1D-ESS, the energy is accumulated along real and imaginary axes separately, thus we have
\begin{equation}
|\Set{V}|= 2\sum_{n=1}^{N} \sum_{i=1}^n O(\overline{P}_{\max})+ O(\overline{P}_{\max}^2) = O(N^2|\Set{S}|),
\end{equation}
where $\overline{P}_{\max}$ is the energy budget on the real or the imaginary axis.

\subsubsection{Encoding and decoding time complexity}
For \ac{ccdm}, the encoding and decoding time is on the order of $O(N|\Set{S}|)$ per codeword, which is standard for arithmetic coding-based \ac{ccdm} \cite{complexity_ccdm}. For all \ac{ess} variants, the encoding and decoding time is on the order of $O(NS)$, where $S$ is the number of equal-magnitude shells in the constellation. Thus for \ac{jhmess} and \ac{sess} we have $S=|\Set{S}|$, while for 1D-ESS we have $S=\sqrt{|\Set{S}|}$.

\subsubsection{Sidelobe controllability}
As discussed in previous sections, \ac{jhmess} guarantees a strict upper bound on \ac{islr}, whereas \ac{sess} provides no such guarantee. By contrast, \ac{ccdm} yields a fixed \ac{islr} because it employs a single type, however, the exact value cannot be freely specified due to its integer type approximation. Similarly, 1D-ESS cannot control \ac{islr}, as its factorized structure breaks \ac{islr} controllability.

\subsubsection{Rate loss}
Compared to \ac{ccdm}, the \ac{jhmess} and the \ac{sess} can significantly reduce the rate loss, since \ac{ess} methods can utilize multiple type classes, while \ac{ccdm} can only use one. The 1D-ESS, however, suffers from severe rate loss due to the rate panelty of forcing the input distribution to factorize, under \ac{eisl} constraints.

\section{Numerical Results}\label{sec:numerical}
In this section, we demonstrate the aforementioned analytical results and the performance of the proposed \ac{pcs} schemes using numerical simulations. If not stated otherwise, we consider 64\ac{qam} constellations, and produce $10000$ codewords (i.e., \ac{ofdm} symbols) for each configuration $4$-tuple $({\rm SNR},\kappa_{\rm off},N,{\rm AID})$, where AID denotes the algorithm identifier of the \ac{pcs} schemes (including \ac{jhmess}, \ac{sess}, \ac{ccdm}, and 1D-ESS), and $\kappa_{\rm off}$ denotes the ratio between the design \ac{eisl} and the unshaped \ac{eisl} of 64\ac{qam}, namely
\begin{equation}
\kappa_{\rm off} = \frac{\kappa_{\rm PCS}-1}{\kappa_{\rm 64QAM}-1}.
\end{equation}
We may report $\kappa_{\rm off}$ in the decibel scale, in which case it is typically negative.

Let us commence with the reference input distributions. Fig.~\ref{fig:constellations} compares the reference input distributions obtained with and without the exponential-family restriction. Recall that \ac{ccdm} can use the unconstrained \ac{mi}-maximizing reference distribution since it directly controls the type, while variants of \ac{ess} can only use the exponential-family reference distribution. Here, we consider two values of \ac{snr}, namely SNR$=10$dB and SNR$=20$dB. At both \acp{snr}, the exponential-family solutions exhibit the regular phase-independent structure induced by its cost depending only on the magnitude of symbols, whereas the unconstrained \ac{mi}-maximizing solutions clearly assign different probability values at different phase components. The distinction becomes less visible at high \ac{snr} values, where the exponential-family distributions becomes better approximations to the unconstrained \ac{mi}-maximizing distributions, since they are entropy-maximizing distributions. Later we shall see that the distinction between these two families of reference distributions only leads to a very limited difference in the achievable \ac{mi}.

\begin{figure}[t]
    \centering
    \subfloat[][Exp. family, SNR=$10$dB]{\includegraphics[width=.45\columnwidth]{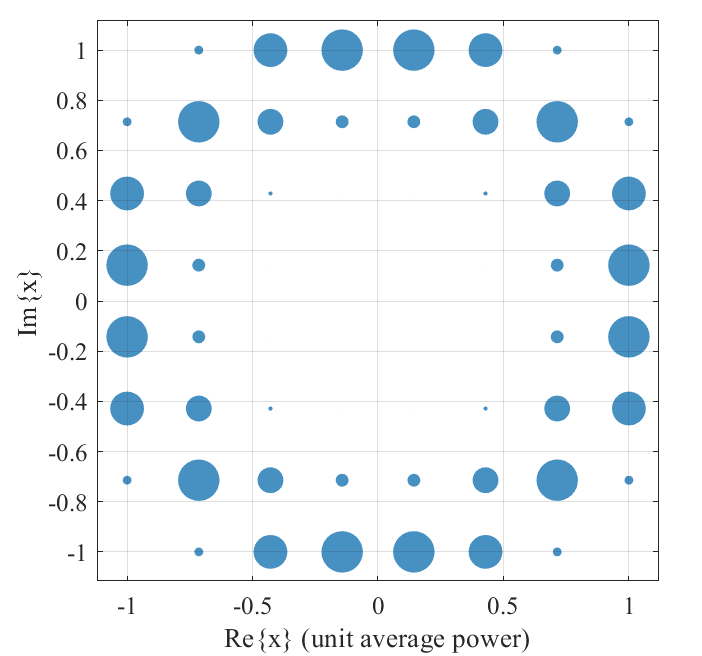}}
    \subfloat[][Unconstrained, SNR=$10$dB]{\includegraphics[width=.45\columnwidth]{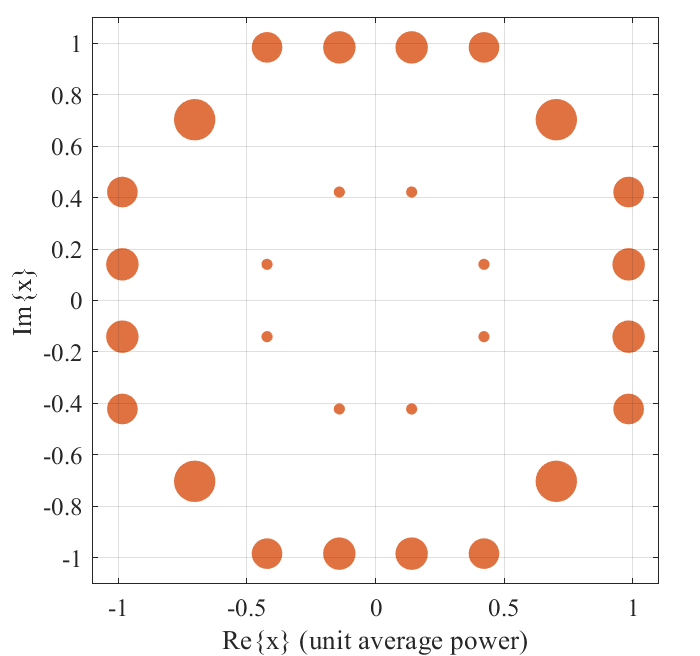}}\\
    \subfloat[][Exp. family, SNR=$20$dB]{\includegraphics[width=.45\columnwidth]{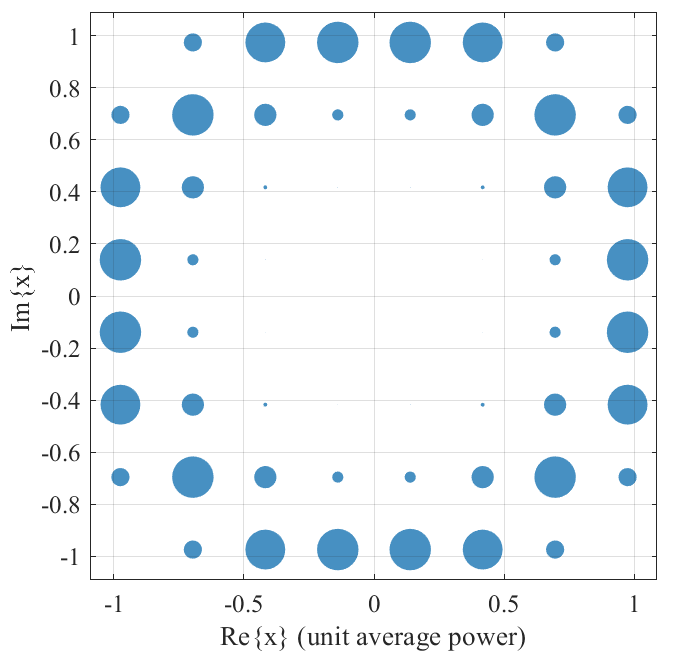}}
    \subfloat[][Unconstrained, SNR=$20$dB]{\includegraphics[width=.45\columnwidth]{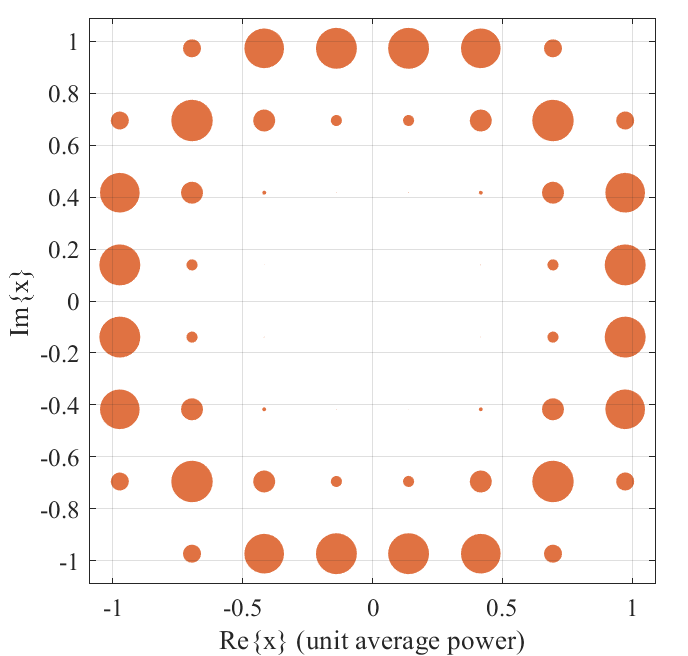}}
    \caption{Reference input distributions with and without exponential family constraints.}
    \label{fig:constellations}
\end{figure}

Next, we consider the \ac{islr} controllability of the \ac{pcs} methods. The single-codeword \acp{cdf} of \ac{islr}, portrayed in Fig.~\ref{fig:single_islr}, verify the controllability distinction discussed in Section~\ref{sec:islr}. In particular, the \ac{jhmess} \ac{cdf} terminates at the prescribed reference level, in agreement with its codeword-wise hard moment constraint. The \ac{ccdm} curve takes a single \ac{islr} value because all of its codewords belong to one type class, although the attainable level is quantized by the integer composition. By contrast, \ac{sess} and 1D-ESS retain their upper tails of \acp{cdf}, because neither construction directly constrains the fourth-order moment of every codeword. From the two clusters of curves corresponding to $\kappa_{\rm off}=-6$dB and $\kappa_{\rm off}=-10$dB, respectively, we see that the shapes of the \ac{islr} \ac{cdf} curves are almost identical across different $\kappa_{\rm off}$ values, with their mean values being shifted. The \ac{ccdm}, however, is special since its mean \ac{islr} value is not strictly controllable, as seen from its difference with the reference \ac{islr} values at different $\kappa_{\rm off}$.

We then proceed to the performance demonstration of the gain-guaranteed coherent integration by sifted shifting. To be specific, we consider the coherent integration of $M=100$ codewords, set $\kappa_{\rm off}=-10$dB, and plot the \ac{islr} \ac{cdf} curves from $1000$ different integrated \acp{pacf}. Observe that after 100 sifted-shifting integrations, Fig.~\ref{fig:ss_islr} shows a pronounced leftward shift of all \acp{cdf}, showing that coherent integration by sifted shifting indeed improves the typical sidelobe behavior. Especially, the \ac{islr} upper bound, obtained by left-shifting the single-codeword reference \ac{islr} value for $20$dB, is on the far right-hand side of all \acp{cdf}, suggesting that sifted shifting can substantially outperform its theoretical guarantee of $1/M$ \ac{islr} reduction in practice. Interestingly, the \acp{cdf} having the same single-codeword \ac{islr} mean value approximately coincide with one another after sifted shifting, while leaving the \ac{ccdm} having a significantly different \ac{islr} mean value outside the cluster. This suggests that sifted shifting tends to equate the \ac{islr} distributions, which is a phenomenon deserving further investigation.

\begin{figure}
    \centering
    \subfloat[][Single codeword]{\includegraphics[width=.45\textwidth]{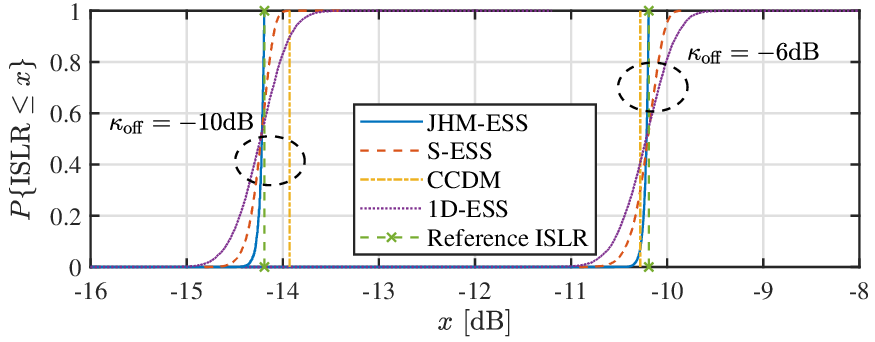}\label{fig:single_islr}}\\
    \subfloat[][$\kappa_{\rm off}=-10$dB, 100x sifted shifting coherent integration]{\includegraphics[width=.46\textwidth]{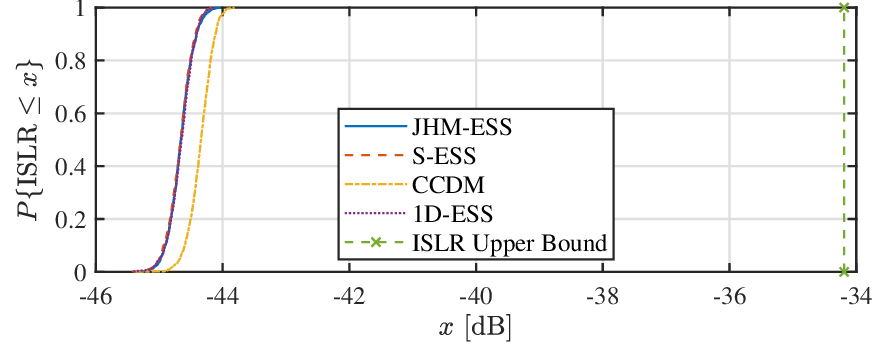}\label{fig:ss_islr}}
    \caption{\acp{cdf} of \ac{islr}, compared among \ac{pcs} schemes, with and without coherent integration.}
    \label{fig:cdf_code_uncode}
\end{figure}

Now, let us consider the finite-length rate loss of the \ac{pcs} schemes. Specifically, we consider the case where SNR=14dB, and $\kappa_{\rm off}$ values of $-4$, $-6$ and $-10$dB. The rate loss is defined as \eqref{rate_loss_def}, with $p_{\rm ref}$ being the uncontrained \ac{mi}-maximizing distribution for \ac{ccdm}, and being the exponential-family constrained distribution for all three variants of \ac{ess}. Fig.~\ref{fig:rate_loss_vs_n} shows that the finite-length rate loss clearly decreases with blocklength for all shaping schemes. Here, the blocklength is measured in bits and equals to $6N$. The two two-dimensional \ac{ess} constructions achieve the smallest losses over the simulated range, confirming the benefit of using many type classes. Furthermore, \ac{sess} is slightly better than \ac{jhmess}, because the value of $\kappa_{\rm off}$ can only determine the mean \ac{islr} value rather than the maximum \ac{islr}, and hence \ac{sess} can use even more codewords than that used by \ac{jhmess} employing hard maximal \ac{islr} constraints. \Ac{ccdm} incurs a larger composition loss because it is restricted to a single type, while 1D-ESS has the largest loss owing to the additional factorization constraint. The ordering agrees with the rate-loss discussion in Sec.~\ref{ssec:cpa} and persists as all curves improve with increasing $N$.

\begin{figure}[t]
    \centering
    \includegraphics[width=0.46\textwidth]{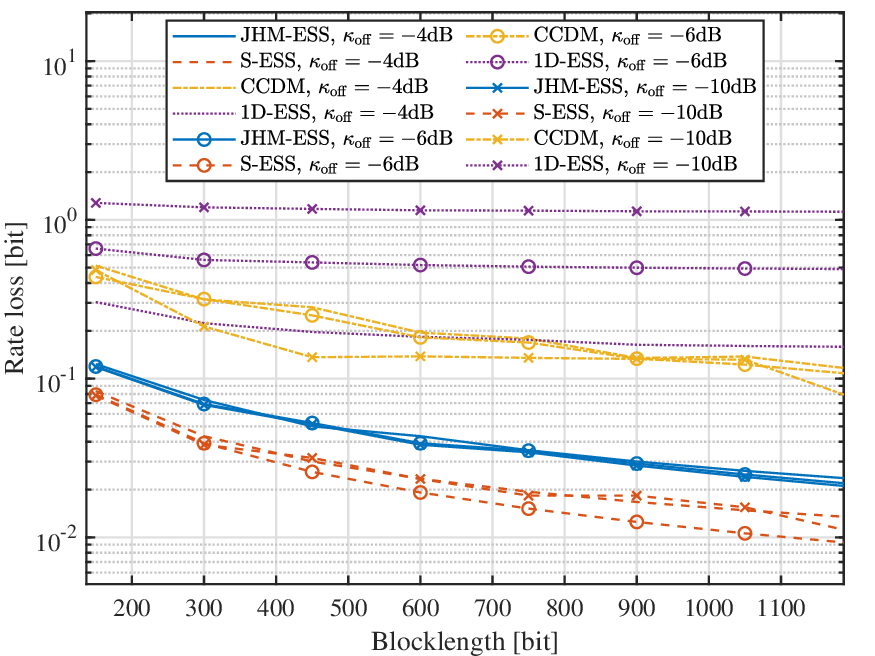}
    \caption{Rate loss vs. blocklength (in bits), compared to the unconstrained \ac{mi}-maximizing reference input distribution.}
    \label{fig:rate_loss_vs_n}
\end{figure}

For the demonstration of the rate-\ac{islr} tradeoff, we use the parameters of $N=300$ and SNR=14dB, and adjust the values of $\kappa_{\rm off}$ to yield different values of expected \ac{islr} and maximum \ac{islr}. Especially, the ``expectation'' and ``maximum'' are computed out of the $10000$ codewords for each \ac{pcs} scheme. When the horizontal axis is the mean \ac{islr}, from Fig.~\ref{fig:rate_mean_islr} we see that the two-dimensional \ac{ess} schemes exhibit nearly identical uncoded (i.e., without \ac{fec}) rate trends, with a clear gap from \ac{ccdm}, whereas 1D-ESS is substantially less efficient than all the remaining schemes. The situation is a bit different when we consider the achievable empirical \ac{mi} (computed from the $10000$ codewords for each \ac{pcs} scheme), for which we see from the zoomed-in inset that the \ac{ccdm} is the supremum among all candidates. This is because the empirical \ac{mi} does not account for the finite-length rate loss, and \ac{ccdm} enjoys its unconstrained \ac{mi} maximization. When the horizontal axis is the max \ac{islr}, we see that \ac{sess} is slightly less favorable than \ac{jhmess} due to its lack of control of the maximum \ac{islr}, while the 1D-ESS becomes even worse, which corroborates the corresponding \ac{cdf} curves in Fig.~\ref{fig:single_islr}.

\begin{figure}
    \centering
    \subfloat[][Rate vs. mean \ac{islr}]{\includegraphics[width=.46\textwidth]{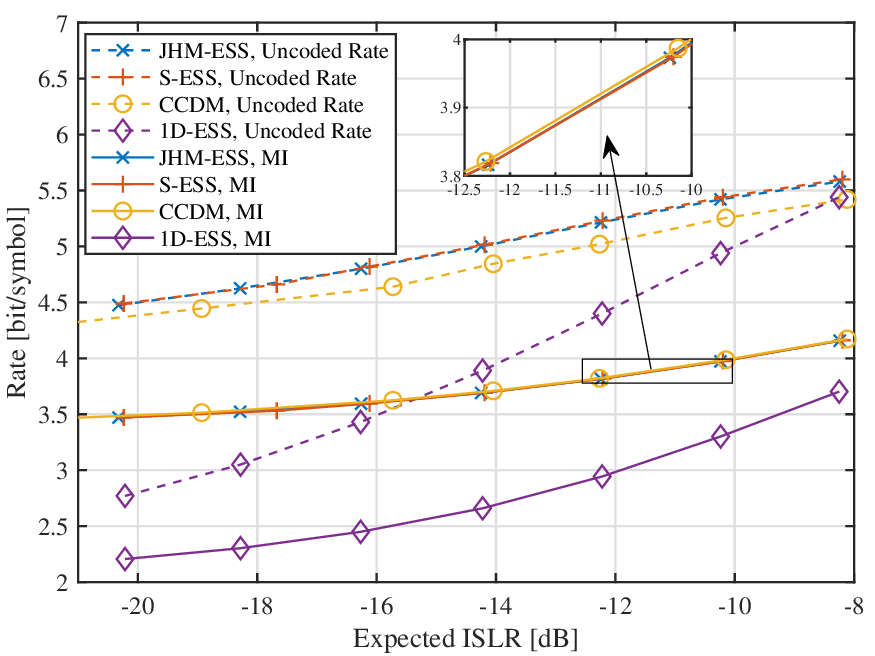}\label{fig:rate_mean_islr}}\\
    \subfloat[][Rate vs. max \ac{islr}]{\includegraphics[width=.46\textwidth]{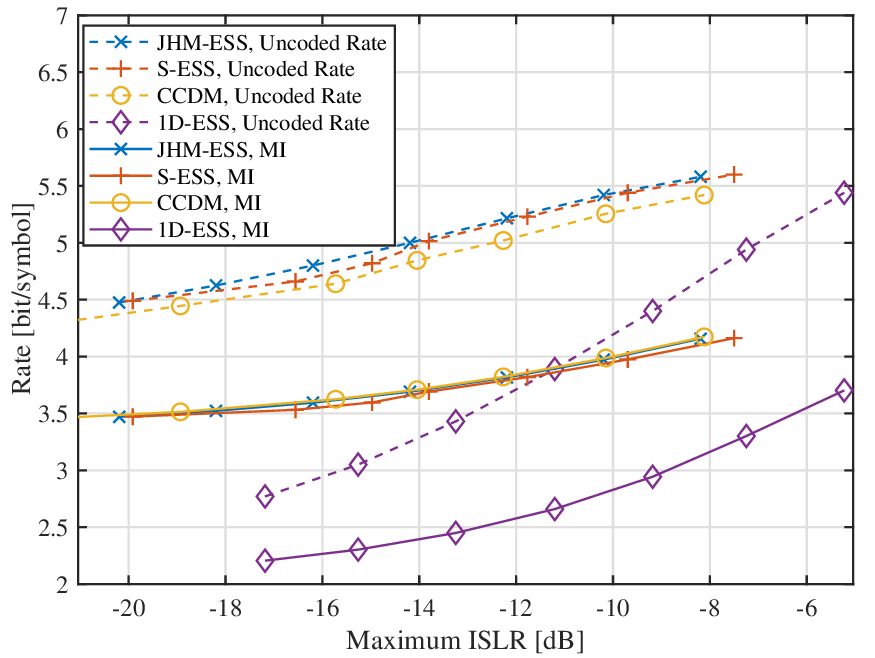}\label{fig:rate_max_islr}}
    \caption{The tradeoff between achievable rate and \ac{islr}.}
    \label{fig:rate_islr}
\end{figure}

Finally, let us consider the combination of \ac{pcs} schemes and \acp{fec}, via the \ac{pas} framework. In particular, we consider the \ac{ldpc} codes having blocklengths of 48, 1296, 1944 bits, and code rates of $1/2$, $2/3$, $3/4$, $5/6$, prescribed by the IEEE 802.11 standard \cite{IEEE80211_2024}. Due to the design constraint of \ac{pas}, the rate $1/2$ \ac{fec} is in fact invalid for 64\ac{qam}, but will later be employed in other benchmarks. 

We first investigate the \ac{bler} of the coded \ac{pcs} schemes, as portrayed in Fig.~\ref{fig:bler_k10}, where we consider the specific configuration of blocklength 648, $\kappa_{\rm off}=-10$dB, and code rate $2/3$. We observe that \ac{jhmess} and \ac{sess} operate at closely comparable $E_b/N_0$, while \ac{ccdm} incurs only a modest displacement. The 1D-ESS curve is shifted farther to the right, reflecting the rate and distribution mismatch introduced by separate shaping of the two real dimensions. These results indicate that enforcing the joint hard-moment constraint does not introduce a substantial coding penalty relative to the strongest finite-length baselines, and are in alignment with the aforementioned uncoded finite-length rate loss results.

\begin{figure}
    \centering
    \includegraphics[width=0.45\textwidth]{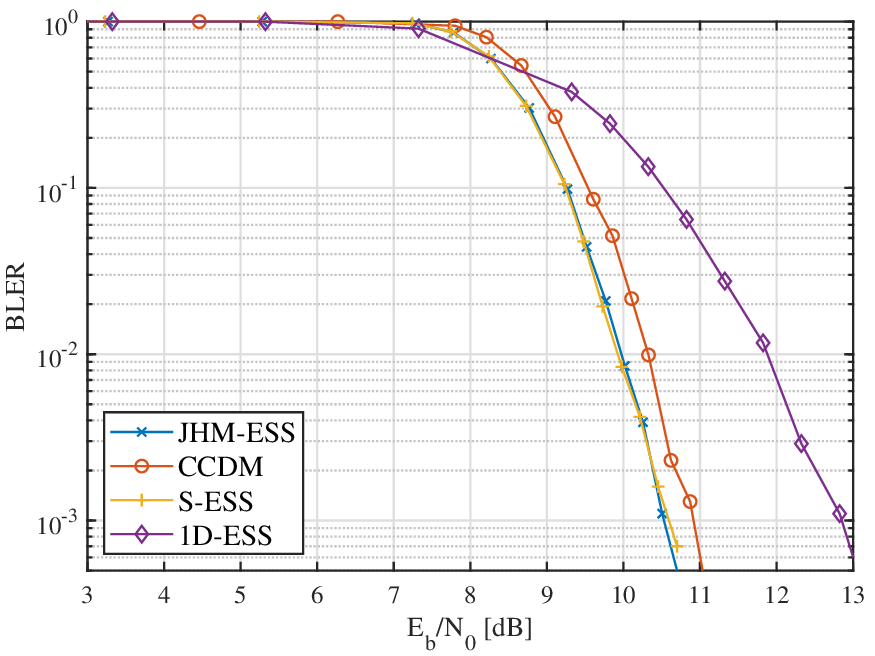}
    \caption{\Ac{bler} of coded \ac{pcs} schemes.}
    \label{fig:bler_k10}
\end{figure}

We next investigate the largest rate supported under varying \ac{fec} code rate and fixed \ac{bler} upper bound of $0.01$, with $\kappa_{\rm off}=-10$dB. In this case, we consider two types of theoretical upper bounds of the achievable rate, namely the \ac{mi} and the \ac{fbl} achievable rate (under normal approximation) given by \cite{fbl}
\begin{equation}
R_{\rm FBL}=I(\V{s};\V{y})-\sqrt{\frac{V}{N}}Q^{-1}(\epsilon),
\end{equation}
where $\epsilon=0.01$ is the \ac{bler} requirement, $Q^{-1}(\cdot)$ denotes the inverse of the standard Gaussian-$Q$ function, and $V$ denotes the channel dispersion that can be computed from the reference distribution $p_{\rm ref}(s)$ via
$$
V = {\rm Var}_{p_{\rm ref}}\left\{\log\frac{p_{y|s}(y|s)}{p_y(y)}\right\}.
$$
The \ac{fbl} bound can be much tighter than \ac{mi} in the \ac{fbl} regime. For additional benchmarks, we consider the 64PSK having the absolute zero \ac{islr} as a lower benchmark, and the 64APSK \ac{gpas} proposed in \cite{isac_pcs_pas}. We use the conventional \ac{ess} as the amplitude shaper in the 64APSK \ac{gpas} to ensure minimal rate loss, and hence fairness of comparison.

In Fig.~\ref{fig:rate_and_bound_k10} we plot the achievable rate at BLER=0.01 for all candidate schemes, with the numbers beside the markers representing the corresponding \ac{fec} code rate. We see that the \ac{sess} exhibits the smallest gap from the \ac{fbl} rate bound, with a clear gap from the \ac{ccdm}. The 64APSK \ac{gpas} exhibits a comparable performance as that of \ac{sess} at high \ac{snr} values, whereas its low-\ac{snr} performance is significantly lower, which corroborates the fact that the Euclidean distance distribution of the 64APSK constellation is less favorable. Notably, the 64PSK can even outperform the \ac{mi} bound of factorized (``1D'') input distributions, suggesting that 1D-ESS is not practically favorable at $\kappa_{\rm off}=-10$dB. This also aligns with the observation in \cite{isac_pcs_pas} that factorized input distributions yield large rate loss in the small-$\kappa$ regime. 

\begin{figure}
    \centering
    \includegraphics[width=0.45\textwidth]{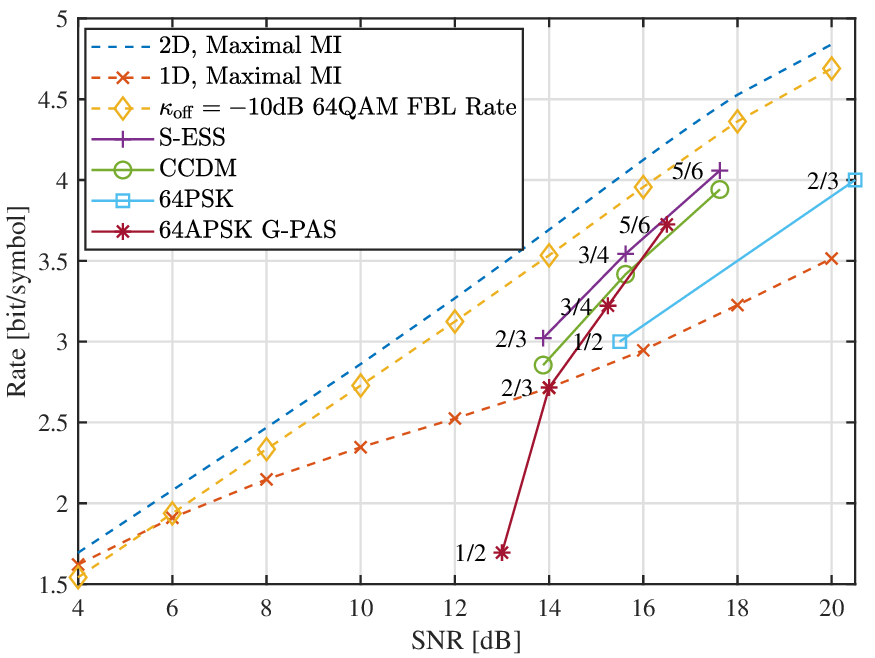}
    \caption{Achievable rate of various \ac{pcs} schemes under \ac{bler}$\leq 0.01$, and the corresponding bounds.}
    \label{fig:rate_and_bound_k10}
\end{figure}

\section{Conclusions}\label{sec:conclusions}
This paper investigated when and how type-based \ac{pcs} can control ranging sidelobes at the codeword level, rather than only in expectation. By identifying type-based controllability with permutation invariance, we characterized all unitary modulation bases that permit \ac{pacf} \ac{islr} control for $N\geq 7$. The resulting family contains \ac{ofdm} as a principal special case and yields the same \ac{islr} as \ac{ofdm} for exactly zero-mean types. By contrast, we proved that no unitary modulation basis can provide type-based control of \ac{aacf}-based \ac{islr} for $N\geq 4$. We also developed sifted shifting for coherent integration, which preserves the individual type-based guarantees and ensures an integrated \ac{islr} upper bounded by $\gamma/M_{\rm eff}$.

To translate these structural results into practical signaling schemes, we introduced quadrant \ac{pas} for joint shaping over \ac{qam} and developed two \ac{ess} constructions. In particular, \ac{jhmess} uses a degeneracy-aware two-dimensional trellis to impose hard moment constraints, thereby guaranteeing the maximum \ac{islr}, whereas \ac{sess} scalarizes the two moments to reduce the memory order from $O(N^3|\Set{S}|^3)$ to $O(N^2|\Set{S}|^2)$, but controls only ensemble behavior. The numerical results corroborated this tradeoff. The two \ac{ess} schemes achieved smaller finite-length rate loss than \ac{ccdm} and factorized 1D-ESS under the same \ac{islr} constraint.

To summarize, our findings establish a direct link between modulation symmetry, type-based shaping, and codeword-wise sensing guarantees, and show that deterministic sidelobe control can be obtained without hurting the implementational convenience of constellation shaping.

\appendices
\section{Proofs of Results in Section~\ref{sec:islr}}\label{sec:proof_islr}
\subsection{Proof of Proposition \ref{prop:perm}}\label{ssec:proof_perm}
\begin{IEEEproof}
First note that types themselves are permutation invariant, thereby \eqref{controllability} would imply 2). Conversely, if 2) holds, for any realizable type $q$, we may choose arbitrarily a representative codeword $\V{s}$ satisfying $\hat{P}_{\V{s}}=q$, and define $h_{\M{U}}(q)=J_{\M{U}}(\V{s})$. If we choose another representative codeword $\V{t}$ in the type class $\Set{T}_N(q)$, there exists a permutation matrix $\M{\Pi}$ such that $\V{t}=\M{\Pi}\V{s}$ since all sequences in the same type class differ only by their order of entries. Thus 2) implies $h_{\M{U}}(\V{t})=h_{\M{U}}(\V{s})$ for all $\V{t}$ and $\V{s}$ in the same type class, which in turn implies 1).
\end{IEEEproof}

\subsection{Proof of Lemma \ref{lem:monomial}}\label{ssec:proof_monomial}
\begin{IEEEproof}
Let $\V{u}_n$ denote the $n$-th column of $\M{U}$. The unitarity and $\ell_4$ isometry imply
\begin{equation}\label{unitary_isometry}
\|\V{u}_n\|_2^2=1,~\|\V{u}\|_4^4=1.
\end{equation}
In addition, we have
\begin{equation}\label{fourth_expansion}
\|\V{u}_n\|_2^4=\|\V{u}_n\|_4^4+2\sum_{i<j}|u_{in}|^2|u_{jn}|^2.
\end{equation}
Combining \eqref{unitary_isometry} and \eqref{fourth_expansion} yields $\sum_{i<j}|u_{in}|^2|u_{jn}|^2=0$, which implies that each column has at most one non-zero entry. Using again the unitarity, we see that $\M{U}$ has to be monomial, and we are done.
\end{IEEEproof}

\subsection{Proof of Theorem \ref{thm:pacf_controllable}}\label{ssec:proof_pacf_controllable}
\begin{IEEEproof}
We examine the two conditions regarding the orbitals of lines separately.
\subsubsection{Condition (a)}
Without loss of generality, assume that $\Set{L}_0,\dotsc,\Set{L}_{N-1}$ constitute an orbital. On each line $\Set{L}_k$, we choose a unit vector $\V{v}_k$. Since the lines are on the same orbital, the symmetric group acts transitively on these lines, namely,
\begin{equation}
\exists \M{G}\in\mathscr{S}_N,~\M{G}\Set{L}_k=\Set{L}_j,
\end{equation}
for any $j$, $k$. Therefore, for any $\M{G}\in\mathscr{S}_N$, there exists a complex scalar $|\lambda_{\M{G},k}|=1$ such that $\M{G}\V{v}_k=\lambda_{\M{G},k}\V{v}_{\sigma_{\M{G}}(k)}$. On the other hand, being a permutation matrix, any $\M{G}\in\mathscr{S}_N$ satisfies $\M{G}\V{1}=\V{1}$. The unit vector $\V{u}=\frac{1}{\sqrt{N}}\V{1}$ thus satisfies
\begin{equation}
\V{u}^{\rm H}\V{v}_k=(\M{G}\V{u})^{\rm H}\V{v}_k = \lambda_{\M{G}_k}\V{u}^{\rm H}\V{v}_{\sigma_{\M{G}}(k)}.
\end{equation}
Taking the magnitude from both sides yields
\begin{equation}
|\V{u}^{\rm H}\V{v}_k| = |\V{u}^{\rm H}\V{v}_{\sigma_{\M{G}}(k)}|,
\end{equation}
thereby $\V{u}$ has constant-modulus projections on all the unit vectors $\{\V{v}_k\}$. By adjusting the phases of $\V{v}_k$'s we obtain
\begin{equation}
\V{u}=\frac{1}{\sqrt{N}}\sum_{k=0}^{N-1}\V{v}_k.
\end{equation}
Since $\M{G}\V{u}=\V{u}$, it further holds that $\lambda_{\M{G},k}=1$. Therefore the actions of $\mathscr{S}_N$ elements on the lines constitute a natural permutation representation $\tau:\mathscr{S}_N\rightarrow \mathscr{S}_N$, which is an automorphism. For $N\neq 6$, it is known that all automorphisms of $\mathscr{S}_N$ are inner automorphisms \cite[Sec.~8.2]{permgroup}, which implies that there exists a permutation matrix $\M{\Xi}$ such that
\begin{equation}\label{inner_auto}
\tau(\M{G})=\M{\Xi}\M{G}\M{\Xi}^{\rm T},\qquad \forall\M{G}\in\mathscr{S}_N.
\end{equation}
Now, let $\M{B}=[\V{v}_0,\dotsc,\V{v}_{N-1}]$, we see that there exists a unitary diagonal matrix $\M{D}$ such that $\M{B}=\M{V}^{\rm H}\M{D}$, since $\V{v}_k\in\M{V}^{\rm H}{\rm span}(\V{e}_k)$. Choosing an appropriate $\sigma$ such that $\tau(\M{G})\V{e}_k=\V{e}_{\sigma_{\M{G}}(k)}$, we have
\begin{align}
\M{G}\M{B}\V{e}_k&=\M{G}\M{V}^{\rm H}\M{D}\V{e}_k\nonumber\\
&=[\V{v}_{\sigma_{\M{G}}(0)},~\dotsc,~\V{v}_{\sigma_{\M{G}}(N-1)}]\M{D}\V{e}_k\nonumber \\
&=\M{V}^{\rm H}\M{D}\tau(\M{G})\V{e}_k\nonumber \\
&=\M{B}\tau(\M{G})\V{e}_k,
\end{align}
and hence $\M{G}\M{B}=\M{B}\tau(\M{G})$. Using \eqref{inner_auto}, we obtain
\begin{equation}
\M{G}\M{B}\M{\Xi}=\M{B}\M{\Xi}\M{G}.
\end{equation}
This implies that $\M{B}\M{\Xi}$ is a centralizer of $\M{G}\in\mathscr{S}_N$. The following lemma gives a characterization of such centralizers.
\begin{lemma}\label{lem:commutation}
If a unitary matrix $\M{C}$ commutes with all permutation matrices in $\mathscr{S}_N$, we have
\begin{equation}\label{c_ab}
\M{C}=\M{C}_{\alpha,\beta}:=e^{j\beta}\M{I}+\frac{1}{N}(e^{j\alpha}-e^{j\beta})\V{1}\V{1}^{\rm T}.
\end{equation}
\begin{IEEEproof}
Using the classical orbital basis result on the centralizer algebra of permutation representations \cite[Sec.~2.2]{cameron_permgroup}, we obtain
\begin{equation}
\M{C}\in{\rm span}\{\M{I},\V{1}\V{1}^{\rm T}\}.
\end{equation}
Using the unitarity of $\M{C}$, it follows that the eigenvalues on ${\rm span}(\V{1})$ and its orthogonal complement are both unimodular, thus leading to \eqref{c_ab}.
\end{IEEEproof}
Using Lemma \ref{lem:commutation}, we finally obtain
\begin{equation}
\M{V}=\M{D}\M{\Xi}\M{C}_{\alpha,\beta}^{\rm H}=\M{D}\M{\Xi}\M{C}_{-\alpha,-\beta},
\end{equation}
where $\M{D}\M{\Xi}\in\mathsf{Mon}(N)$. 
\end{lemma}

\subsubsection{Condition (b)}
In this case, the fact that the orbital contains $N-1$ lines implies that there is a homomorphism $\varphi:\mathscr{S}_N\rightarrow\mathscr{S}_{N-1}$. When $N\geq 5$, $\mathscr{S}_N$ has a non-trivial normal subgroup, the alternating group $\mathscr{A}_N$. Since $\varphi$ is not injective, we have $\mathscr{A}_N\subseteq {\rm Ker}~\varphi$ for $N\geq 5$. Thus using the basic theorem of homomorphisms, we have
\begin{equation}
|{\rm Im}~\varphi|\leq \frac{|\mathscr{S}_N|}{|\mathscr{A}_N|}=2.
\end{equation}
However, in Condition (b) the orbital size argument yields $|{\rm Im}~\varphi|\geq N-1$, which is clearly impossible for $N\geq 5$. Therefore, orbitals satisfying Condition (b) do not exist when $N\geq 5$.

The full proof is thus completed by combining the analysis of Conditions (a) and (b).
\end{IEEEproof}

\subsection{Proof of Lemma \ref{lem:monomial_zeros}}\label{ssec:proof_monomial_zeros}
\begin{IEEEproof}
The sufficiency is straightforward since the sidelobes of a sequence having a single non-zero entry are obviously all zero. Conversely, assume by contradiction that $\V{s}$ has at least two non-zero entries, let $i<j$ be the smallest and the largest non-zero indices. We see that for $k=j-i$, the autocorrelation $r_k=\sum_{n} s_n^*s_{n+k}$ is non-zero since the product $s_n^*s_{n+k}$ is only non-zero when $n=i$. This implies that $A_{\rm a}(\V{s})>0$, and hence the original assumption is false. This completes the proof.
\end{IEEEproof}

\subsection{Proof of Lemma \ref{lem:reversal}}\label{ssec:proof_reversal}
\begin{IEEEproof}
Note that
\begin{align}
\|\V{s}\|_4^4 &= \|\V{s}\|_2^2+A_{\rm a}(\V{s})\nonumber \\
&=\sum_{a+b=c+d\atop 0\leq a,b,c,d<N}s_as_bs_c^*s_d^*. \label{fourth_monomials}
\end{align}
Preservation of $A_{\rm a}$ is thus equivalent to the preservation of $\|\V{s}\|_4$ since $\|\V{s}\|_2^4$ is already preserved due to unitarity. Now, substituting \eqref{monomial_result} into \eqref{fourth_monomials}, we obtain
\begin{equation}
\sigma(a)+\sigma(b)=\sigma(c)+\sigma(d).
\end{equation}
Using $0+n=1+(n-1)$, we further obtain
\begin{equation}
\sigma(n)-\sigma(n-1)=\sigma(1)-\sigma(0):=a,
\end{equation}
which further implies that
\begin{equation}\label{affine_perm}
\sigma(n)=\sigma(0)+a.
\end{equation}
Since $\sigma(k)$ is a bijection, $a$ is a non-zero integer. Thus \eqref{affine_perm} implies $(N-1)|a|\le N-1$, thereby $a=\pm 1$. Thus we have
\begin{equation}
\sigma(n)=n~{\rm or}~\sigma(n)=N-1-n,
\end{equation}
which corresponds to the identity and the reversal permutation, respectively. Thus the proof is completed.
\end{IEEEproof}

\subsection{Proof of Lemma \ref{lem:ramp}}\label{ssec:proof_ramp}
\begin{IEEEproof}
First assume that the permutation component is an identity transformation. Using again the identity $0+n=1+(n-1)$ we have the monomial $s_0s_ns_1^*s_{n-1}^*$. Note that all four permitted permutations $(0,n,1,n-1),~(n,0,1,n-1),~(0,n,n-1,1),~(n,0,n-1,1)$ correspond to the same monomial $s_0s_ns_1^*s_{n-1}^*$, and hence we have $4\lambda_0\lambda_n\lambda_1^*\lambda_{n-1}^*=4$. This implies
\begin{equation}
\frac{\lambda_n}{\lambda_{n-1}} = \frac{\lambda_1}{\lambda_0} := q.
\end{equation}
When the permutation component is the reversal permutation, we may simply alter the indices accordingly. The proof is thus completed.
\end{IEEEproof}

\subsection{Proof of Theorem \ref{thm:nogo_aislr}}\label{ssec:proof_nogo_aislr}
\begin{IEEEproof}
Let
\begin{equation}
\widetilde{\Set{G}}_N=\{a\M{D}(q):|a|=|q|=1\},
\end{equation}
which is an Abelian normal subgroup of $\Set{G}_N$. Note that
\begin{equation}
\Set{G}_N/\widetilde{\Set{G}}_N \cong \mathscr{C}_2,
\end{equation}
where $\mathscr{C}_2$ is the 2-cyclic group accounting for the potential reversal. If \eqref{aacf_islr_control} holds, the commutator subgroup of $\mathscr{S}_N$, the alternating group $\mathscr{A}_N$, would have to be mapped into $\widetilde{\Set{G}}_N$ by the conjugate mapping $g:\M{G}\mapsto \M{U}\M{G}\M{U}^{\rm H}$. However, when $N\geq 4$, $\mathscr{A}_N$ is non-Abelian and hence cannot be mapped into any Abelian group via the injection $g$. Thus we are done.
\end{IEEEproof}

\bibliographystyle{IEEEtran}
\bibliography{isacpcs}
\end{document}